\documentclass[11pt]{article}
\usepackage[letterpaper,hmargin=1in,vmargin=1in]{geometry}
\usepackage{graphicx}
\usepackage{booktabs}
\usepackage{adjustbox,multirow}
\usepackage{epstopdf}
\epstopdfsetup{update}
\usepackage{url}
\usepackage{hyperref}
\usepackage{amsmath,amsfonts,latexsym,verbatim,threeparttable,amssymb,amsfonts,dsfont,rotating,pifont}
\usepackage{array,hhline}
\usepackage{times}
\usepackage{verbatim}
\usepackage{wrapfig}
\usepackage{caption}
\usepackage{subcaption}
\usepackage{multicol}

\usepackage{tikz}
\usetikzlibrary{calc} 
\usetikzlibrary{arrows}
\usepackage{verbatim}

\usepackage{pgfplots}
\usepackage{url,times,verbatim,multirow}
\usepackage{amsfonts,amsmath,amssymb, amsthm,color,wasysym}
\usepackage{latexsym,verbatim,threeparttable,rotating} 
\usepackage{hyperref}
\usepackage{array,hhline} 	
\usepackage{amsmath,footnote}
\usepackage{xcolor}
 \usepackage{etex}
\hypersetup{
     colorlinks   = true,
     citecolor    = gray
}
\date{}

\newtheorem{theorem}{Theorem}[section]
\newtheorem{corollary}[theorem]{Corollary}
\newtheorem{lemma}[theorem]{Lemma}

\newtheorem{definition}[theorem]{Definition}

\newtheorem{claim}[theorem]{Claim}

\usepackage{color}
\usepackage{framed}

  \newcommand{\commentAS}[1]{\textcolor{red}{{\sf (Ajith's Note:)} {\sl{#1}}}}
  \newcommand{\commentP}[1]{\textcolor{magenta}{{\sf (Pratik's Note:)} {\sl{#1}}}}

\mathchardef\mhyphen="2D


\newcommand{\subscript}[1]{{{#1}}}
\newcommand{\vset}[3]{(#1_{#2},\cdots ,#1_{#3})}


\newcommand{\xor}{\oplus}
\newcommand{\Xor}{\bigoplus}
\newcommand{\band}{\odot}

\newcommand{\Order}{\mathcal{O}}

\newcommand{\define}{\ensuremath{:=}}

\def\cross{\times}
\newcommand{\from}{\leftarrow}


\newcommand{\negl}{\mathtt{negl}}
\newcommand{\hdi}{\mathtt{HDI}}
\newcommand{\puncture}{\mathtt{PRN}}

\newcommand{\ssec}{\mu}
\newcommand{\csec}{\kappa}

\newcommand{\A}{\mathcal{A}}
\newcommand{\B}{\mathcal{B}}

\newcommand{\I}{\mathcal{I}}


\newcommand{\ot}[3]{{{#3} \choose 1}\textsf{-OT}_{#1}^{#2}}
\newcommand{\sen}{\textsf{S}}
\newcommand{\rec}{\textsf{R}}


\newcommand{\WH}{\texttt{WH}}

\newcommand{\accept}{\texttt{Accept}}
\newcommand{\reject}{\texttt{Reject}}
\newcommand{\abort}{\texttt{Abort}}

\newcommand{\prob}{\ensuremath{\mathsf{Pr}}}
\newcommand{\poly}{\ensuremath{\mathsf{Poly}}}
\newcommand{\Fault}{\ensuremath{\mathcal{T}}}

\newcommand{\PassCheck}{\ensuremath{\mathsf{PC}}}
\newcommand{\Dist}{\ensuremath{\mathsf{D}}}
\newcommand{\DecodeFail}{\ensuremath{\mathsf{FE}}}

\newcommand{\vzero}{\ensuremath{\mathbf{0}}}
\newcommand{\advA}[2]{\ensuremath{Adv_\subscript{#1}[#2]}}


\newcommand{\mA}{\ensuremath{\mathbf{A}}}
\newcommand{\mB}{\ensuremath{\mathbf{B}}}

\newcommand{\mD}{\ensuremath{\mathbf{D}}}
\newcommand{\mE}{\ensuremath{\mathbf{E}}}

\newcommand{\ve}{\mathbf{e}}
\newcommand{\va}{\mathbf{a}}
\newcommand{\vb}{\mathbf{b}}
\newcommand{\vc}{\mathbf{c}}
\newcommand{\vd}{\mathbf{d}}
\newcommand{\vk}{\mathbf{k}}
\newcommand{\vp}{\mathbf{p}}

\newcommand{\vs}{\mathbf{s}}

\newcommand{\vx}{\mathbf{x}}
\newcommand{\vy}{\mathbf{y}}
\newcommand{\vz}{\mathbf{z}}
\newcommand{\vw}{\mathbf{w}}


\newcommand{\C}{\mathcal{C}}
\newcommand{\lf}{\mathcal{L}}

\newcommand{\F}{\mathbb{F}}

\newcommand{\N}{\mathbb N}

\newcommand{\X}{\mathcal{X}}


\newcommand{\Func}{\mathcal{F}}
\newcommand{\Sim}{\mathcal{S}}
\newcommand{\Env}{\mathcal{Z}}
\newcommand{\Adv}{\mathcal{A}}
\newcommand{\Ideal}{\ensuremath{\textsc{IDEAL}}}
\newcommand{\Real}{\ensuremath{\textsc{REAL}}}
\newcommand{\Hyb}{\ensuremath{\textbf{\textsc{HYB}}}}

\newcommand{\FOT}[3]{\ensuremath{\mathcal{F}^{({#3},{#2},{#1})}_{\textsf{OT}}}}

\newcommand{\VIEW}[7]{\ensuremath{\mathcal{V}^{\textsc{#1}}_{#2, #3, #4}(#5, #6, #7)}}

\newcommand{\FCOIN}{\ensuremath{\mathcal{F}_\textsc{COIN}}}
\newcommand{\FPSI}{\ensuremath{\mathcal{F}_\textsf{PSI}}}
\newcommand{\coin}{\texttt{coin}}
\newcommand{\FRAND}{\ensuremath{\mathcal{F}_\textsc{RAND}}}

\newcommand{\XANDY}{\ensuremath{\vx \cap \vy}}
\newcommand{\PIS}{\ensuremath{\Pi}}



\newcommand{\figlab}[1]{\label{fig:#1}}

\newenvironment{boxfig}[2]{
	\begin{figure}[htb!]		
		\fontsize{5}{5}\selectfont
		\newcommand{\FigCaption}{#1}
		\newcommand{\FigLabel}{#2}
		\vspace{-.10cm}
		\begin{center}
			\caption{\FigCaption}
			\begin{small}			 
				\begin{adjustbox}{max width=\textwidth}
					\begin{tabular}{@{}|@{~~}l@{~~}|@{}}
						\hline
						\rule[-1ex]{0pt}{1ex}\begin{minipage}[b]{.95\linewidth}
							\vspace{1ex}	
						}{%
					\end{minipage}\\
					\hline
				\end{tabular}	
			\end{adjustbox}		
		\end{small}
		\vspace{-0.25cm}
		\figlab{\FigLabel}
	\end{center}
	\vspace{-.38cm}
\end{figure}
}

\newenvironment{boxfig*}[2]{
	\begin{figure*}[h!]		
		\fontsize{5}{5}\selectfont
		\newcommand{\FigCaption}{#1}
		\newcommand{\FigLabel}{#2}
		\vspace{-.10cm}
		\begin{center}
			\caption{\FigCaption}
			\begin{small}			 
				\begin{adjustbox}{max width=\textwidth}
					\begin{tabular}{@{}|@{~~}l@{~~}|@{}}
						\hline
						\rule[-1ex]{0pt}{1ex}\begin{minipage}[b]{.95\linewidth}
							\vspace{1ex}	
						}{%
					\end{minipage}\\
					\hline
				\end{tabular}	
			\end{adjustbox}		
		\end{small}
		\vspace{-0.25cm}
		\figlab{\FigLabel}
	\end{center}
	\vspace{-.38cm}
\end{figure*}
}

\renewenvironment{proof}{\noindent\textit{Proof.} }{\qed}

\newcommand{\bitset}{\{0,1\}}

\mathchardef\mhyphen="2D


\newcounter{itemcount}

\newenvironment{mydescription}
{\setcounter{itemcount}{0}\begin{list}
{\arabic{itemcount}.}{\usecounter{itemcount} \itemindent=-0.5cm
\itemsep=0.0in
\parsep=0.0in
\topsep=5pt
\partopsep=0.0in}}{\end{list}}

\newenvironment{mydescnoindent}
{\setcounter{itemcount}{0}\begin{list}
{\arabic{itemcount}.}{\usecounter{itemcount} 
\itemsep=0.05in
\parsep=0.0in
\topsep=3pt
\partopsep=0.0in}}{\end{list}}

\newcommand{\bool}{\{0,1\}}
\newcommand{\ppt}{\textsf{PPT}}
\newcommand*{\ovA}[1]{\overline{#1\raisebox{2mm}{}}}

\begin{document}
\title{\bf Fast Actively Secure OT Extension for Short Secrets}

  \author{Arpita Patra \thanks{Indian Institute of Science. Email: \tt{arpita@csa.iisc.ernet.in}.}  \and Pratik Sarkar \thanks{Indian Institute of Science. Email: \tt{pratik.sarkar@csa.iisc.ernet.in}.}  \and  Ajith Suresh \thanks{Indian Institute of Science. Email: \tt{ajith.s@csa.iisc.ernet.in}.}}


\maketitle

\begin{abstract}
Oblivious Transfer (OT) is one of the most fundamental cryptographic primitives  with wide-spread application in general secure multi-party computation (MPC) as well as in a number of tailored and special-purpose problems of interest such as private set intersection (PSI), private information retrieval (PIR), contract signing to name a few.  Often the instantiations of OT require prohibitive  communication and computation complexity. OT extension protocols are introduced to compute a very large number of OTs  referred as {\em extended} OTs at the cost of  a small number of OTs referred as {\em seed} OTs. 

We present a fast OT extension protocol for small secrets in active setting. Our protocol when used to produce $1$-out-of-$n$ OTs  outperforms all the known actively secure OT extensions. Our protocol is built on the semi-honest secure extension protocol of Kolesnikov and Kumaresan of CRYPTO'13 (referred as KK13 protocol henceforth) which is the best known OT extension for  short secrets. At the heart of our protocol lies an efficient consistency checking mechanism that relies on the linearity of Walsh-Hadamard (WH) codes. Asymptotically, our protocol adds a communication overhead of $\Order(\ssec \log{\csec})$ bits over KK13 protocol irrespective of the number of  extended OTs, where $\csec$ and $\ssec$ refer to computational and statistical security parameter respectively. 
 Concretely, our protocol when used to generate a large enough number of OTs adds only $0.011\mbox{-}0.028\%$  communication overhead and  $4\mbox{-}6\%$ runtime overhead both in LAN and WAN over KK13 extension. The runtime overheads drop below $2\%$ when  in addition the number of inputs of the sender in the extended OTs is large enough.

As an application of our proposed extension protocol, we show that it can be used to obtain the most efficient PSI protocol secure against a malicious receiver and a semi-honest sender. 
\end{abstract}

\section{Introduction}
Oblivious Transfer (OT)~\cite{NaorP05,Kilian88,BrassardCR86,EvenGL85,Rabin81} is perhaps the most fundamental primitive in cryptographic protocol theory. It is a two party protocol between a sender $\sen$ and a receiver $\rec$. The sender holds an array of inputs and the receiver holds an index indicating its intended pick from the sender's array. OT allows the sender to send the receiver's selected input while preserving the secrecy of the sender's other inputs on the one hand and the choice of the receiver on the other.   The  necessity and sufficiency of OT for secure multi-party computation (MPC)~\cite{Kilian88,GoldreichV87,GoldreichMW87,Yao86} backs the theoretical importance of OT. On the practical front, OT has been pivotal in  building  several state-of-the-art practically efficient general MPC protocols \cite{Lindell16,LindellP15,LindellR15,HuangKKKM14,FrederiksenJNNO13,ShelatS13,NielsenO09}  and several protocols for  special-purpose problems of interest such as private set intersection (PSI) ~\cite{PinkasSZ15,PinkasSZ14,DongCW13}. 
There is a fundamental limitation to OT's efficiency as it is unlikely that OT is possible without public-key cryptography and  solely relying on symmetric-key cryptography~\cite{ImpagliazzoR89}. The OT extension protocols \cite{KellerOS15, AsharovL0Z15, KolesnikovK13, AsharovL0Z13, NielsenNOB12, IshaiKNP03, Beaver96} have been introduced to theoretically circumvent the above limitation of OTs. They produce a large number of  OTs  referred as {\em extended} OTs from a small number of OTs referred as {\em seed} OTs and symmetric-key primitives.  When the goal is to generate a large number of OTs which is usually the case for the applications of OT, the amortized cost of generating a single OT via OT extensions turns out to be a constant number of symmetric-key operations. So most of the known practically efficient general and special-purpose MPC protocols are byproduct of concretely efficient OT extension protocols.   Of particular interest to cryptographic community are the following variants of OT: {\sf (a)} In a $1$-out-of-$2$ OT \cite{EvenGL85}, $\sen$ holds two inputs $x_0,x_1$, and $\rec$ holds a {\em choice bit} $b$. The output to $\rec$ is $x_b$ and no other party learns anything. {\sf (b)} A straight-forward extension of  $1$-out-of-$2$ OT is $1$-out-of-$n$ OT \cite{BrassardCR86} where $\sen$ holds $n$ inputs and $\rec$ holds a choice index of $\log{n}$ bits. While the first kind finds application in MPC \cite{GoldreichMW87,Yao82b}, the second kind is useful in PSI \cite{PinkasSZ15,PinkasSZ14}, symmetric PIR \cite{NaorP05}, and oblivious sampling \cite{NaorP05}, oblivious polynomial evaluation \cite{NaorP99}. As discussed below, attempts have been made to construct OT extension  protocols to output both the above kinds of OTs.   


\subsection{OT Extensions}
The theoretical feasibility of OT extension was proved by Beaver \cite{Beaver96}.   Ishai, Kilian, Nissim and Petrank \cite{IshaiKNP03} (referred as IKNP protocol henceforth) presented the first efficient OT extension protocol that builds on $\csec$ seed OTs and requires computing and sending just two hash values per extended OT.  In \cite{AsharovL0Z13},  IKNP protocol has seen several optimizations  that boost both its communication and computation complexity. Specifically, the communication per extended OT is brought down to one hash value for a special case where the extended OTs are needed for random inputs of the sender.  The computation bottleneck for implementing matrix transposition is tackled by introducing a cache-oblivious algorithm. Yet another contribution from \cite{AsharovL0Z13} is their crucial observation that the actual bottleneck in the runtime of IKNP protocol results from its communication time, particularly in wide area networks (WANs) that have high latency and low bandwidth.   In a first of its kind approach, Kolesnikov and Kumaresan~\cite{KolesnikovK13} (referred as KK13 protocol henceforth) presented an OT extension protocol that  outputs $1$-out-of-$n$ OTs starting from $2\csec$ $1$-out-of-$2$ seed OTs  and relying on specifics of Walsh-Hadamard (WH) codes.   KK13 protocol improves over all its predecessors (including IKNP) customized to generate $1$-out-of-$n$ OTs   by a factor $\Order(\log{n})$ in communication complexity when the inputs of the extended OTs are of short size.  So far KK13 protocol remains to be the most efficient way of generating $1$-out-of-$n$ OTs  for short inputs.   All the protocols discussed above work when the adversary is assumed to be {\em semi-honest}. A passive or semi-honest adversary  follows the protocol specification but attempts to learn more than allowed by inspecting the protocol transcript. An adversary is referred as  {\em active} or {\em malicious} when it behaves in any arbitrary way in an attempt to break the security of the protocol. 

OT extension literature finds numerous attempts to achieve active security. All the attempts  restrict their attention in transforming the semi-honest secure  IKNP protocol to an actively secure one. Since the IKNP protocol is resilient to any malicious behavior of the sender, an actively secure IKNP style protocol needs to enforce honest behaviour for the receiver.  Adding consistency checks for the receiver has been the strategy followed in all  the known constructions.  The efficiency (both communication and computation wise) of the consistency checks defines the overhead for an actively secure IKNP style protocol.  The consistency check introduced in \cite{IshaiKNP03}  employs expensive cut-and-choose technique on $\ssec$ parallel instances of the semi-honest IKNP protocol  where $\ssec$ is a statistical security parameter. \cite{HarnikIKN08, Nielsen07} proposes consistency checks per extended OTs.  This is improved in \cite{NielsenNOB12}  where the checks are done per seed OT. In order to tackle information leakage in their consistency check, \cite{NielsenNOB12}  needs to start with $\frac{8}{3} \csec$ seed OTs which is $\frac{8}{3}$ times more than what IKNP protocol needs. This inflates their concrete communication and computation complexity by the same factor.  \cite{AsharovL0Z15} improves  over \cite{NielsenNOB12} by trading  computation in consistency checks for a reduced number of seed OTs. Namely, the OT extension of \cite{AsharovL0Z15}  requires $\csec + 1.55\ssec$ seed OTs compared to $\frac{8}{3}\csec$  of   \cite{NielsenNOB12} and thus improves the communication done via seed OTs. In a recent work, \cite{KellerOS15} reports the most efficient actively secure IKNP style protocol that brings back the number of seed OTs to $\csec$ and handles the information leakage in the consistency check by sacrificing $\csec + \ssec$ extended OTs. The check requires an $\Order(\csec)$ bits communication irrespective of the number of extended OTs and two finite field operations per extended OT. 

Above we concentrated on practically efficient OT extension literature. Some interesting theoretical questions on OT extension are addressed in~\cite{Larraia14,LindellZ13}.

\subsection{Our Contribution}
We present an actively secure OT extension for short secrets building upon the semi-honest secure protocol of \cite{KolesnikovK13}. Like KK13 protocol, our extension protocol turns $1$-out-of-$2$ seed OTs to $1$-out-of-$n$ extended OTs. Similar to IKNP protocol, KK13 protocol is  secure against any malicious behaviour of sender but falls apart in the face of a maliciously corrupt receiver. We present a concrete attack on KK13 and add an efficient  consistency check to enforce correct behaviour of the receiver. Our check relies on the linearity of WH codes. Combined with an additional trick, our efficient consistency check incurs a communication of $\Order(\ssec \log{\csec})$ bits irrespective of the number of generated extended OTs.  Asymptotically, our OT extension matches the KK13 protocol in every respect.  Table~\ref{tab:compareallasym} shows the efficiency of various OT extension protocols  achieving $2^{-\csec}$ computational security and $2^{-\ssec}$ statistical security for producing  $m$ $1$-out-of-$n$ OTs with $\ell$-bit inputs of the sender. The following parameters have been used for comparison: {\sf (i)} number of seed OTs, {\sf (ii)} communication complexity and  {\sf (iii)} computation complexity in terms of number of hash value computations. 

\begin{table}[htb!]
	\centering
	\caption{\footnotesize Asymptotic cost of various OT extensions for producing  $m$ $1$-out-of-$n$ OTs with $\ell$-bit inputs of the sender and for achieving $2^{-\csec}$ computational security and $2^{-\ssec}$ statistical security.}
	\label{tab:compareallasym}
	\resizebox{.48\textwidth}{!}{
		\begin{tabular}{  l | l | l | l } 
			\toprule
			Reference				& \# Seed OTs  & Communication (bits) / & Security \\ 
			&    &  Computation (\# hashes)&   \\
			\midrule
			~\cite{KolesnikovK13}	& $2\csec$ &     $\Order(m(\csec + n \ell)) $  &	semi-honest \\  
			~\cite{IshaiKNP03}   	&  $\csec$ &     $\Order(m(\csec \log n + n \ell)) $ &   semi-honest \\ 
			~\cite{NielsenNOB12}	&  $\frac{8}{3}\csec$ &  $\Order(m(\csec \log n + n \ell)) $   &  active   \\ 
			~\cite{AsharovL0Z15} 	& $\csec + 1.55\ssec$ &    $\Order(m(\csec \log n + n \ell)) $  & active  \\ 
			~\cite{KellerOS15}		& $\csec$ &    $\Order(m(\csec \log n + n \ell)) $ &  active   \\ 
			This Paper			& $2\csec$ &   $\Order(m(\csec + n \ell)) $ &   active \\  
			\bottomrule
		\end{tabular}
	}	
\end{table}

Concretely, our protocol when used to generate large enough number of OTs adds only $0.011\mbox{-}0.028\%$  communication overhead and  $4\mbox{-}6\%$ runtime overhead both in LAN and WAN over KK13 protocol. The runtime overheads drop below $2\%$ when  in addition the number of inputs of the sender in the extended OTs is large enough. Our construction put in the context of other OT extensions are presented in Table~\ref{tab:commall}. The table presents figures for generating $1.25 \times 10^6$ $1$-out-of-$16$ OTs with sender's input length as $4$ bits.  The overheads are calculated with respect to KK13 protocol.  The implementation of \cite{KellerOS15} is not available in the same platform as the other OT extensions given in the table. As  per the claim made in \cite{KellerOS15}, the runtime of their OT extension bears an overhead of $5\%$ with respect to IKNP protocol both in LAN and WAN. So the runtime and overhead  in runtime of \cite{KellerOS15} with respect to KK13 protocol are calculated based on that claim.  As evident from Table~\ref{tab:commall},  our protocol when used to compute $1$-out-of-$n$ OTs  with short inputs of the sender outperforms all the known actively secure OT extensions and secures the second best spot among all the OT extension protocols listed in Table~\ref{tab:commall}  closely trailing KK13 which is the overall winner. More elaborate empirical results supporting the above claim with varied number of extended OTs and varied number of inputs of the sender in the extended OTs appear later in the paper.

\begin{table*}[h!]			
	\centering
	\caption{\footnotesize Concrete cost of various OT extension protocols for producing $1.25 \times 10^6$ $1$-out-of-$16$ OTs with sender's input length as $4$ and for achieving computational security of $2^{-128}$ and  statistical security of $2^{-40}$.}
	\label{tab:commall}
	\resizebox{.75\textwidth}{!}{
		\begin{tabular}{ l | l | l | l | l | l | l | l } 
			\toprule
			\multirow{2}{*}{Reference} & \multirow{2}{*}{\# Seed}  & Comm  & \multicolumn{2}{| c}{Runtime (in sec)} & \multicolumn{3}{| c }{Overhead w.r.t.  \cite{KolesnikovK13} (in \%)} \\
			\hhline{~~~-----}
			&   OTs & (in MB)  & LAN & WAN & Communication & Runtime in LAN & Runtime in WAN \rule{0pt}{3ex}  \\  
			\midrule
			~\cite{KolesnikovK13}	& $256$	& $47.69$	 	& $21.68$	 & 	$115.34$   & $0$  & $0$  	& $0$ \\ 
			~\cite{IshaiKNP03}   	& $128$ 	& $87.74$ 	& $24.07$	  & $133.81$	 & $84$  & $11.02$  & $16$ \\
			~\cite{NielsenNOB12}	& $342$	& $215.95$ 	&  $24.84$  & $143.20$    & $352.7$ & $14.6$ & $24.14$ \\
			~\cite{AsharovL0Z15} 	& $190$	& $166.54$ 	&  $24.81$ & $158.6$	& $249.1$ & $14.4$  & $37.5$ \\
			~\cite{KellerOS15} 	& $128$	&  $> 87.74$	&   $>25.27$	 & $>140.5$	& $> 84$ & $>16.5$  & $>21.8$ \\
			This Paper 			& $256$ 	& $47.70$		& $22.50$   &  $121.94$   & ${\bf 0.028}$ &  ${\bf 3.78}$	 & ${\bf 5.72}$ \\ [1ex] 
			\bottomrule
		\end{tabular}
	}
	
\end{table*}

Lastly, the OT extensions presented in all the works in the table except  \cite{KolesnikovK13} inherently produce $1$-out-of-$2$ OTs. The transformation  from  $1$-out-of-$2$ to $1$-out-of-$n$ OT given in \cite{NaorP05} is used to transform their extended OTs to $1$-out-of-$n$ OTs.  The transformation that works for reverse direction  \cite{NaorP05} is unfortunately {\em not} maliciously secure. This prevents us from claiming a similar gain  when our protocol is used to generate $1$-out-of-$2$ OTs. We leave open the question of finding an efficient actively secure transformation from  $1$-out-of-$n$ to $1$-out-of-$2$ OT.

We show an interesting application of our proposed extension protocol in OT-based PSI protocols. Specifically, we use our maliciously secure OT extension in the PSI protocol of  \cite{cryptoeprint:2016:665} to obtain the most efficient PSI protocol  that is maliciously secure against a corrupt receiver and semi-honestly secure against a corrupt  sender.  In brief, a PSI protocol between two parties, namely a sender $\sen$ and a receiver $\rec$ holding sets $X = \{x_1, x_2, \ldots x_{n_1}\}$ and $Y= \{y_1, y_2,  \ldots y_{n_2} \}$ respectively, outputs  the intersection $X \cap Y$ to the receiver and nothing to the sender.  As evident from the theoretical and experimental results presented in this work, our maliciously secure OT extension protocol is a better choice compared to the existing maliciously secure extension protocols ~\cite{AsharovL0Z15,NielsenNOB12,KellerOS15}   when $1$-out-of-$n$ OTs  are required as output. As PSI employs $1$-out-of-$n$ (instead of $1$-out-of-$2$) OTs,  our extension protocol fits the bill. Lastly, we find a concrete vulnerability for the malicious corrupt receiver case in  Lamb{\ae}k's PSI protocol when semi-honest KK13 OT protocol is used in it. This confirms Lamb{\ae}k's concern of privacy breach of his PSI protocol that may result from privacy breach of the underlying OT protocols and further confirms the necessity of  maliciously secure OT extension in Lamb{\ae}k's PSI protocol. 

\section{Preliminaries}
\label{sec:prelim}
We present below the required preliminaries and techniques. We revisit the KK13 \cite{KolesnikovK13} protocol  and present a concrete attack on it in Section~\ref{sec:kk13}. Next, our proposed actively secure protocol with efficiency and security analysis is presented in Section~\ref{malkk13}. Section~\ref{sec:imple} shows our empirical findings and analysis.  Lastly,  the application of our actively secure protocol in PSI appear in Section \ref{sec:psi}.

\subsection{Notations}
We use $\xor$ to denote bitwise XOR operation and $\band$ to denote bitwise AND operation. We denote vectors in bold smalls and matrices in bold capitals. For a matrix $\mA$, we let $\va_j$ denote the $j$th row of $\mA$, and $\va^i$ denote the $i$th column of $\mA$. For a vector $\va$, $a_i$ denotes the $i$th element in the vector. For two vectors $\va$ and $\vb$ of length $p$,  we use the notation $\va \xor \vb$ to denote the vector $(a_1 \xor b_1,\cdots, a_p \xor b_p)$   and the notation $\va \band \vb$ to denote the  vector $(a_1 \band b_1, \cdots, a_p \band b_p)$. The notation $\Xor\limits_{j=1}^m \va_j$ denotes the XOR of $m$ vectors, i.e. $\va_1 \xor \cdots \xor \va_m$. We denote by $\va \otimes \vb$ the inner-product value $\Xor\limits_{i=1}^p a_i \band b_i $. Finally, suppose $c \in \bool$, then $c \band \va$ denotes the vector $(c \band a_1,\cdots, c \band a_p)$.  We denote by $a \gets_R A$ the random sampling of $a$ from a distribution $A$. We denote by $[x]$, the set of elements $\{1,\ldots,x\}$. 

We denote by $\hdi$ a function that takes two binary vectors of same length and returns the indices where the input vectors are different.    For a vector $\vc$ of length, say $p$, and an index set $\I \subset [p]$,  $\puncture_{\I}(\vc)$  denotes the pruned vector  that remains after removing the bits of $\vc$ corresponding to the indices listed in $\I$.  For a set $\C$ of vectors $\{ \vc_1,\ldots, \vc_m\}$,  $\puncture_{\I}(\C)$ denotes the set of pruned vectors $\{ \puncture_{\I}(\vc_1),\ldots, \puncture_{\I}(\vc_m)\}$.

\noindent{\em Security Parameters.} We denote the statistical security parameter by $\ssec$ and the cryptographic security parameter by $\csec$.  A negligible function in $\csec$ ($\ssec$) is denoted by $\negl(\csec)$ ($\negl(\ssec)$), while $\negl(\csec, \ssec)$ denotes a function which is negligible in both $\kappa$ and $\ssec$. A function $\negl(\cdot)$ is {\em negligible} if for every polynomial $p(\cdot)$ there exists a value $N$ such that for all $n>N$ it holds that $\negl(n)<\frac{1}{p(n)}$. We  write \ppt\ for probabilistic polynomial-time.

\noindent{\em Oblivious Transfers.}  For oblivious transfers, we denote the sender by $\sen$ and the receiver by $\rec$. In a $1$-out-of-$2$ OT on $\ell$ bit strings, $\sen$ holds two inputs $x_0,x_1$, each from $\bitset^\ell$ and $\rec$ holds a {\em choice bit} $b$. The output to $\rec$ is $x_b$ and no other party learns anything. We denote a $1$-out-of-$2$ OT on $\ell$ bit strings as $\ot{\ell}{ }{2}$. We denote a $1$-out-of-$n$ OT on $\ell$ bit strings as $\ot{\ell}{ }{n}$.   Finally,  we write $\ot{\ell}{m}{n}$ to denote $m$ instances of $\ot{\ell}{ }{n}$.  Similarly,  $\ot{\ell}{m}{2}$ denotes $m$ instances of $\ot{\ell}{ }{2}$.

\subsection{Walsh-Hadamard (WH) Codes}
Walsh-Hadamard  (WH)  code is a linear code over a binary alphabet $\F_2$ that maps messages of length $p$ to codewords of length $2^p$. We use WH code that maps messages of length $\log{\csec}$ to codewords of length $\csec$.  For $\vx \in \bitset^{\log{\csec}}$, $\WH(\vx)$ denotes the WH encoding of $\vx$ defined as  $\WH(\vx) \define (\vx \otimes \va)_{\va \in \bitset^{\log{\csec}}}$.  It is the $\csec$-bit string consisting of inner products of each $\log{\csec}$-bit string $\va$ with $\vx$. For each $\csec$, the WH code, denoted by $\C_{\WH}^\csec$ is defined as the set $\left\{\WH(\vx)\right\}_{\vx \in \bitset^{\log{\csec}}}$. Note that  $\C_{\WH}^\csec$ contains $\csec$ codewords each of length $\csec$ bits. Our OT extension protocol  relies on the following well-known property of WH codes.

\begin{theorem}
	The  distance of $\C_{\WH}^\csec$ is $\frac{\csec}{2}$ when $\csec$ is a power of $2$.
\end{theorem}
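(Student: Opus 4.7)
My plan is to exploit the linearity of $\C_{\WH}^\csec$ so that the minimum distance equals the minimum Hamming weight of a nonzero codeword. Since $\WH(\vx \xor \vy) = \WH(\vx) \xor \WH(\vy)$ (each coordinate is an inner product, which is bilinear over $\F_2$), the code is linear, and therefore it suffices to show that every nonzero codeword $\WH(\vx)$ has weight exactly $\csec/2$.

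Fix any nonzero $\vx \in \bitset^{\log \csec}$ and consider the map $\phi_\vx : \bitset^{\log \csec} \to \bitset$ defined by $\phi_\vx(\va) = \vx \otimes \va$. By definition, the coordinate of $\WH(\vx)$ indexed by $\va$ equals $\phi_\vx(\va)$, so the Hamming weight of $\WH(\vx)$ is exactly $\abs{\{\va \in \bitset^{\log \csec} : \phi_\vx(\va) = 1\}}$. The map $\phi_\vx$ is $\F_2$-linear; since $\vx \neq \mathbf{0}$, there is at least one index $i$ with $x_i = 1$, and then $\phi_\vx(\ve_i) = 1$, showing $\phi_\vx$ is not identically zero. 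Hence $\phi_\vx$ is a surjective linear map from an $(\log \csec)$-dimensional $\F_2$-space onto $\F_2$.

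The kernel of $\phi_\vx$ is therefore a hyperplane of dimension $\log \csec - 1$, of size $2^{\log \csec - 1} = \csec/2$, and its complement (the preimage of $1$) has the same size $\csec/2$ since the two cosets of the kernel in $\bitset^{\log \csec}$ are in bijection via translation by any $\va_0$ with $\phi_\vx(\va_0) = 1$. Consequently $\WH(\vx)$ has Hamming weight exactly $\csec/2$ for every nonzero $\vx$, so the minimum distance of $\C_{\WH}^\csec$ is $\csec/2$ (and in fact every nonzero codeword attains this weight). The hypothesis that $\csec$ is a power of $2$ is used only to guarantee that $\log \csec$ is an integer, so the WH encoding on $\log \csec$-bit messages is well-defined; no other arithmetic step is delicate, and I do not foresee a substantive obstacle beyond recording these observations carefully.
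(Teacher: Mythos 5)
Your proof is correct and is the standard argument: linearity of $\WH$ reduces the minimum-distance claim to showing every nonzero codeword has weight $\csec/2$, which follows because $\va \mapsto \vx \otimes \va$ is a nonzero linear functional on $\F_2^{\log\csec}$ whose kernel and nonkernel coset each have size $\csec/2$. The paper itself gives no proof of this theorem, stating it only as a well-known property of Walsh--Hadamard codes, so there is no authorial argument to compare against; your write-up supplies exactly the argument a reader would be expected to recall.
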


\subsection{Hash Function and Random Oracle Model}
\label{sec:RO}
We use a hash function  $H : \bitset^* \rightarrow \bitset^{\poly(\csec)}$ which we  model as a random oracle. Namely, we prove the security of our protocol assuming that $H$ implements a functionality $\FRAND$ which  for different inputs $x$, returns uniform random output values from the range of $H(x)$. 

\section{An Attack on \cite{KolesnikovK13} Protocol }
\label{sec:kk13}
The KK13 OT extension protocol  is known to provide the best communication complexity among the existing constructions when   the input length of the sender is `short'.  The protocol is proven to be secure against a semi-honest receiver and  a malicious sender. It was {\em not} known if the protocol is secure against malicious receiver. We show that the protocol is {\em insecure} against a malicious receiver. We prove this by giving an attack that can be mounted by a maliciously corrupt receiver to break the security of the sender. Our finding sets the stage for a maliciously secure OT extension in KK13 style which is the concern of this paper. Below we recall the KK13 OT extension protocol prior to presenting our attack. We also briefly recall its efficiency analysis from  \cite{KolesnikovK13}.

\subsection{KK13 OT Extension Protocol}\label{subsec:kk13}
The OT extension protocol constructs a $\ot{\ell}{m}{n}$   relying on an instance of $\ot{\csec}{\csec}{2}$.  We recall the simpler version of the protocol that reduces   $\ot{\ell}{m}{n}$  to  $\ot{m}{\csec}{2}$. It is well-known that $\ot{m}{\csec}{2}$ can be constructed from $\ot{\csec}{\csec}{2}$ with some additional cost.

Following the footstep of~\cite{IshaiKNP03}, KK13 OT extension allows the receiver to send an $m\cross \csec$ matrix column-wise to the sender using an instance of $\ot{m}{\csec}{2}$ where the sender acts as the receiver and vice versa.  In~\cite{IshaiKNP03} OT extension, the $i$th row of the transferred matrix allows the sender to create two pads for the two messages in the $i$th extended OT.  One of  the two pads is a function of  the sender's input bit vector to  $\ot{m}{\csec}{2}$ and thus is unknown to the receiver.  The other pad is completely known to the receiver.   The pad known to the receiver is used as the mask for the intended message of the receiver.  The above allows the receiver to unmask and learn its intended message for each extended OT but nothing more.  Going along the same line,  KK13 OT extension allows the sender to create  $n$ pads for the $n$ messages in the $i$th extended OT using the $i$th row of the transferred matrix. Much like IKNP, the receiver knows exactly one pad out of the $n$ pads and the pad it knows is in fact the mask for its intended message. All the remaining $n-1$ pads are function of the sender's input bit vector to  $\ot{m}{\csec}{2}$ and thus are unknown to the receiver.   The ability to generate $n$ masks instead of just $2$ from each of the rows of the transferred matrix  is achieved by cleverly incorporating WH codewords from $\C_\WH^\csec$ in each of the rows of the transferred $m \cross \csec$ matrix. The use of  $\C_\WH^\csec$ restricts the value of $n$ to be at most $\csec$.

The protocol uses WH code $\C_{\WH}^{\csec}$ that consists of $\csec$ codewords each of length $\csec$ denoted as $\vset{\vc}{1}{\csec}$. The receiver $\rec$ chooses two random $m\cross \csec$ matrices $\mB$ and $\mD$ such that $i$th row of matrix $\mE = \mB \xor \mD$ is $\vc_\subscript{r_i}$  where $r_i$ is the input of the receiver for the $i$th extended OT. On the other hand, the sender $\sen$ picks a $\csec$ bit length vector $\vs$ uniformly at random. The parties then interact via $\ot{m}{\csec}{2}$ reversing their roles. Namely, the sender $\sen$ acts as the receiver with input $\vs$ and the receiver $\rec$ acts as a sender with inputs $\{\vb^j, \vd^j \}_{j \in [\csec]}$. After the execution of $\ot{m}{\csec}{2}$, the sender holds an $m\cross \csec$ matrix $\mA$ such that the $i$th row of $\mA$ is the $i$th row of $\mB$ xored with the bitwise AND of $\vs$ and $\vc_{r_i}$, i.e. $\va_i =  \left(\vb_i \xor (\vs \band \vc_\subscript{r_i}) \right)$. With the $i$th row $\va_i$ of the matrix $\mA$, the sender creates $n$ pads for the $n$ messages in the $i$th extended OT as follows: $\big\{H\big(i, \va_i \xor (\vs \band \vc_j )\big)\big\}_{j \in [n]}$ where $H$ is a random oracle.  The $j$th pad will be used to blind the $j$th message of the sender in the $i$th extended OT. It is easy to note that the pad for the $r_i$th message is $H(i, \vb_i )$ (since $\va_i =  \left(\vb_i \xor (\vs \band \vc_{r_i}) \right)$) which the receiver can compute with the knowledge of $\mB$ matrix. For the $j$th message where $j$ is different from $r_i$,  the pad turns out to be $H\big(i, \vb_i \xor (\vs \band( \vc_\subscript{r_i} \xor\vc_j ))\big)$ where $\vc_\subscript{r_i}$ and $\vc_j$ are distinct codewords. Since the distance of WH code $\C_\WH^\csec$ is $\csec/2$, $\vc_{r_i}$ and $\vc_j$ are different at $\csec/2$ positions implying that $\csec/2$ bits of $\vs$ contribute to the input of the random oracle $H$. Since the vector $\vs$ is unknown to the receiver (recall that the sender picks $\vs$),   it is hard for an \ppt~receiver to retrieve the other pads making the protocol secure for a sender. The protocol of KK13 that realizes $\ot{\ell}{m}{n}$ given ideal access to $\ot{m}{\csec}{2}$ appears in Fig. \ref{fig:KK13OTEXT}.

\begin{boxfig*}{The KK13 OT Extension Protocol}{KK13OTEXT}
	\begin{center}
		\textbf{Protocol for $\ot{\ell}{m}{n}$  from  $\ot{m}{\kappa}{2}$}
	\end{center}
	\begin{mydescription}
		\item[--] \textbf{Input of $\sen$:} $m$ tuples $\big\{\vset{\vx}{i,1}{i,n} \big\}_{i \in [m]}$ of $\ell$ bit strings.
		
		\item[--] \textbf{Input of $\rec$:} $m$ selection integers $\vset{r}{1}{m}$ such that each $r_i \in [n]$.
		
		\item[--] \textbf{Common Inputs:} A security parameter $\csec$ such that $\csec \geq n$, and Walsh-Hadamard code ${\C}_{\WH}^\csec = \vset{\vc}{1}{\csec}$.
		
		\item[--] \textbf{Oracles and Cryptographic Primitives:} A random oracle $H : [m] \cross \bool^\csec \to \bool^\ell $. An ideal $\ot{m}{\csec}{2}$ primitive.
	\end{mydescription}
	\begin{enumerate}
		
		\item{\bf Seed OT Phase:} 
		
		\begin{enumerate}
			\item $\sen$ chooses $\vs \from \bool^\csec$ at random. 
			\item $\rec$ forms two $m \cross \csec$ matrices $\mB$ and $\mD$ in the following way:
			\begin{mydescription}
				\item[--] Choose $\vb_i, \vd_i \from \bool^\csec$ at random such that $\vb_i \xor \vd_i = \vc_\subscript{r_i}$.
				Let $\mE \define \mB \xor \mD$. Clearly $\ve_i = \vc_\subscript{r_i}$.
			\end{mydescription}
			
			\item $\sen$ and $\rec$ interact with $\ot{m}{\csec}{2}$ in the following way.  
			\begin{mydescription}
				\item[--] $\sen$ acts as {\em receiver} with input $\vs$.
				\item[--] $\rec$ acts as {\em sender} with input $\big\{(\vb^j, \vd^j) \big\}_{j \in [\csec]}$.
				\item[--] $\sen$ receives output $\{\va^j\}_{j \in [\csec]}$ and forms $m \cross \csec$ matrix $\mA$ with the $j$th column of $\mA$ as $\va^j$. Clearly  
				\begin{mydescription}
					\item[i.] $\va^j = \big(\vb^j \xor (s_j \band \ve^i)\big)$ and
					\item[ii.]  $\va_i = \big(\vb_i \xor (\vs \band \ve_i) \big) = \big(\vb_i \xor (\vs \band \vc_\subscript{r_i}) \big)$. 
				\end{mydescription}
			\end{mydescription}
		\end{enumerate}       
		\item{\bf OT Extension Phase:} 
		\begin{enumerate}
			\item For every $i \in [m]$, $\sen$ computes $\vy_{i,j} = \vx_{i,j} \xor H\big(i, \va_i \xor (\vs \band \vc_j )\big)$ and sends $\{\vy_{i,j}\}_{j \in [n]}$.
			
			\item For every $i \in [m]$, $\rec$ recovers $\vz_i = \vy_\subscript{i,r_i} \xor H(i,\vb_i)$.
		\end{enumerate}   
	\end{enumerate} 
\end{boxfig*}

It is easy to verify that the protocol is correct (i.e., $\vz_i = \vx_{i,r_i}$) when both parties follow the protocol. 

\subsection{An Attack}  \label{sec:kk13attack}
At the heart of the attack lies a clever way of manipulating the $\mE$ matrix (cf. Section~\ref{subsec:kk13}) which  should  contain  WH codewords in its rows in an honest execution.  Recall that the security of the sender lies in the fact that the distance of WH code $\C_\WH^\csec$ is $\csec/2$.  The pads for the messages that are not chosen as the output by the receiver, are the random oracle outputs of an input consisting of $\csec/2$ bits of $\vs$. Since the receiver $\rec$ does not know $\vs$, it cannot guess the pads too in polynomial time. So one way of breaking the privacy of the other inputs of the sender is to find out the bits of the vector  $\vs$. Our strategy allows the receiver to recover the $i$th bit of $\vs$ at the cost of two calls to the random oracle under the assumption that $\rec$ has apriori knowledge of its chosen input $\vx_\subscript{i,r_i}$ for the $i$th extended OT.  This is achieved by tweaking the rows of $\mE$ matrix which are codewords from $\C_\WH^\csec$ in an honest execution. Specifically, the $i$th row of $\mE$, $\ve_i$ is $\vc_\subscript{r_i}$ in an honest execution. It is now tweaked to a $\csec$-bit string that is same as $\vc_\subscript{r_i}$ in all the positions barring the $i$th position. Specifically, recall that a WH codeword $\vc_i$ from $\C_\WH^\csec$ is a $\csec$-length bit vector $(c_{i,1},\ldots, c_{i,\csec})$. We denote complement of a bit $b$ by $\ovA{b}$. Then the $i$th row $\ve_i$ of $\mE$ is set as $(c_{i,1},\ldots,\ovA{c_{i,i}},\ldots, c_{i,\csec})$. The matrix is tweaked as above for every $i$th row as long as $i \leq \csec$.  The rest of the rows in $\mE$ starting from $\csec$ to $m$ do not need to be tweaked. The matrix $\mE$ after tweaking is given below. We denote the tweaked matrix as $\ovA{\mE}$ and the tweaked rows as $\ovA{\vc_{r_i}}$ for $i\leq \csec$.

{
	\centering
	\fontsize{9}{9}
	\[
	\ovA{\mE}
	=
	\begin{bmatrix}
	\ovA{c_\subscript{r_1,1}}  & c_{\subscript{r_1,2}}               & \dots     & \dots          & \dots   & c_{\subscript{r_1,\csec}} \\
	c_{\subscript{r_2,1}}        & \ovA{c_{\subscript{r_2,2}}}          & \dots     & \dots          & \dots   & c_{\subscript{r_2,\csec}} \\
	\vdots          & \vdots                   & \ddots    & \vdots         & \ddots  & \vdots \\
	c_{\subscript{r_i,1}}        & c_{\subscript{r_i,2}}               & \dots     & \ovA{c_{\subscript{r_i,i}}} & \dots   & c_{\subscript{r_i,\csec}} \\
	\vdots          & \vdots                   & \ddots    & \vdots         & \ddots  & \vdots \\
	c_{\subscript{r_\csec,1}}        & c_{\subscript{r_\csec, 2}}               & \dots     & c_{\subscript{r_\csec, j}} & \dots   & \ovA{c_{\subscript{r_\csec,\csec}}} \\
	c_{\subscript{r_\csec,1}}        & c_{\subscript{r_\csec, 2}}               & \dots     & c_{\subscript{r_{(\csec+1)}, j}} & \dots   & c_{\subscript{r_{(\csec+1)},\csec}} \\
	\vdots          & \vdots                    & \ddots    & \vdots         & \ddots  & \vdots \\
	c_{\subscript{m,1}}        & c_{\subscript{m,2}}               & \dots     & \dots          & \dots   & c_{\subscript{m,\csec}}
	\end{bmatrix}
	=
	\begin{bmatrix}
	\ovA{\vc_\subscript{r_1}}   \\
	\ovA{\vc_\subscript{r_2}}   \\
	\vdots     \\
	\ovA{\vc_\subscript{r_i}}   \\
	\vdots     \\
	\ovA{\vc_\subscript{r_\csec}}  \\
	\vc_{r_\subscript{\csec +1} }  \\
	\vdots     \\
	\vc_{r_m}
	\end{bmatrix}
	\]%
}

When $\rec$ uses $\ovA{\mE}$ instead of $\mE$, the $i$th row of $\mA$ for $i \leq \csec$ will be $\ovA{\va_i} = \left(\vb_i \xor (\vs \band \ovA{\vc_\subscript{r_i}}) \right)$. The pad used to mask the $r_i$th message $\vx_\subscript{i,r_i}$ in $i$th extended OT  is: 
\begin{align*}
H\big(i,\ovA{\va_i} \xor (\vs \band \vc_{r_i} )\big) &= H\Big(i, \vb_i  \xor \big(\vs \band (\ovA{\vc_{r_i}} \xor \vc_{r_i} ) \big)\Big) \nonumber \\
&= H\Big(i, \vb_i \xor  (\vs \band  0^{i-1}10^{\csec -i}\Big) \nonumber \\
&= H\left(i, \vb_i \xor   0^{i-1}s_i0^{\csec -i}\right)  \nonumber
\end{align*}
Now note that the malicious receiver has cleverly made the pad used for $\vx_\subscript{i,r_i}$ a function of sole unknown bit $s_i$. With the knowledge of  its chosen input $\vx_\subscript{i,r_i}$ and the padded message $\vy_\subscript{i,r_i}$ that the receiver receives in the OT extension protocol, the malicious receiver $\rec$ recovers the value of the pad  by finding $\vy_\subscript{i,r_i} \xor \vx_\subscript{i,r_i}$. It further knows that $\vy_\subscript{i,r_i} \xor \vx_\subscript{i,r_i}$ is  same as $H\left(i, \vb_i \xor   0^{i-1}s_i0^{\csec -i}\right)$. Now two calls to the random oracle $H$ with inputs $\big\{(i, \vb_i \xor   0^{i-1}s_i0^{\csec -i})\big\}_{s_i \in \bitset}$ is sufficient to find the value of $s_i$. In the similar way, it can find entire input vector of the sender, $\vs$ with $2\csec$  number (polynomial in $\csec$) of calls to the random oracle breaking the  privacy of the sender completely.  The attack works in the version of KK13 that reduces   $\ot{\ell}{m}{n}$   to $\ot{\csec}{\csec}{2}$ without any modification. 

\subsection{Efficiency of \cite{KolesnikovK13}}
\label{kk13eff}
Since efficiency is the prime focus of this paper and we build an OT extension protocol in KK13 style secure against malicious adversaries, we recall the communication complexity of KK13 from~\cite{KolesnikovK13}. For complexity analysis we consider the version of KK13 that reduces   $\ot{\ell}{m}{n}$   to $\ot{\csec}{\csec}{2}$ (presented in Appendix D of~\cite{KolesnikovK13}) and requires less communication than the one  that reduces   $\ot{\ell}{m}{n}$  to  $\ot{\csec}{m}{2}$.    The communication complexity of KK13 OT extension producing $\ot{\ell}{m}{n}$ is $\Order(m(\csec + n\ell))$ bits.  

The best known semi-honest OT extension protocol before KK13 is IKNP protocol \cite{IshaiKNP03} which has a communication complexity  of $\Order(m(\csec + \ell))$ bits for  producing   $\ot{\ell}{m}{2}$ from  $\ot{\csec}{\csec}{2}$. To get $\ot{\ell}{m}{n}$ as the output from IKNP protocol, the efficient  transformation of  \cite{NaorP05} is used. The transformation generates  $\ot{\ell}{1}{n}$ from $\ot{\csec}{\log{n}}{2}$ with an additional (outside the execution of $\ot{\csec}{\log{n}}{2}$) communication cost of $\Order(\ell n)$ bits. 
This transformation can be repeated $m$ times to reduce $\ot{\ell}{m}{n}$ to $\ot{\csec}{m\log{n}}{2}$ with an additional communication cost of $\Order(\ell m n)$ bits.  So to get $\ot{\ell}{m}{n}$ as the output from IKNP protocol,  first   $\ot{\csec}{m\log{n}}{2}$  is produced via ~\cite{IshaiKNP03}  and then the reduction from $\ot{\ell}{m}{n}$ to  $\ot{\csec}{m\log{n}}{2}$ is used that requires  an additional communication cost of $\Order(\ell m n)$ bits. So the total communication turns out to be $\Order(m\log{n}\cdot(\csec + \csec) + \ell m n) = \Order(m(\csec \log{n} +n\ell ))$ bits. Now recall that  $n\leq \csec$, a restriction that comes from the KK13 OT extension (due to the fact that ${\C}_{\WH}^\csec$ contains $\csec$ codewords). Given this bound, as long as $\ell = \Omega(\log{n})$, KK13 OT extension gives better communication complexity than IKNP protocol.


\section{Actively Secure OT Extension for Short Secrets}
\label{malkk13}
We make the KK13 OT extension protocol secure against a malicious receiver by adding a consistency check that relies on linearity of WH code and adds a communication of $\Order(\ssec \log{\csec})$ bits irrespective of the number of  extended OTs. We  first discuss the properties of WH code relevant to us  for the correctness of the consistency check. We then discuss the check   and  our actively secure protocol. As we will see  the check involves an additional trick apart from the linearity of WH codes to achieve the claimed communication complexity.   We also describe the required ideal functionalities. 

\subsection{Randomized Linearity Testing}\label{linearity_test}
We focus on WH code that maps messages of length $\log{\csec}$ to codewords of length $\csec$. A WH codeword for a $\log{\csec}$-bit input  $\vx$ can be viewed as a truth table of a linear function $\lf_\vx: \bitset^{\log{\csec}} \rightarrow \bitset$ parametrised with $\vx$ where  $\lf_\vx(\va) =  \vx \otimes \va$. The WH codeword for $\vx$ can be defined as $\WH(\vx) \define (\lf_\vx(\va))_{\va \in \bitset^{\log{\csec}}}$. It is easy to note that $\lf_\vx(\va) = \lf_\va(\vx)$ for any $\va \in \bitset^{\log{\csec}}$. So we can rewrite the WH codeword for $\vx$ as $\WH(\vx) \define (\lf_\va(\vx))_{\va \in \bitset^{\log{\csec}}}$. It is also easy to note that $\lf_\va()$ is a linear function since $\lf_\va(\vx \xor \vy) =\lf_\va(\vx) \xor \lf_\va(\vy)$ for any $\vx$ and $\vy$ in $\bitset^{\log{\csec}}$. This implies that given codewords, say $\vc_\vx$ and $\vc_\vy$ corresponding to $\vx$ and $\vy$ respectively, the codeword for $\vx \xor \vy$ can be obtained as $\vc_\vx \xor \vc_\vy$.  In general, any linear combination of a set of WH codewords will lead to a WH codeword. On the other hand XOR of a codeword and a non-codeword will be a non-codeword.  We note that the above statements are true for pruned code $\puncture_\I(\C_\WH^\csec)$ for any $\I$ of size less than $\csec/2$.  The distance of $\puncture_\I(\C_\WH^\csec)$ is $\csec/2 - |\I|$ which is at least $1$.

In our OT extension protocol, we need to verify whether a set strings are individually valid WH codewords or not. In particular the number of strings to be verified is proportional to the number of extended OTs output by the OT extension protocol. In practice, it will be in the order of millions. Individual string testing may inflate the computation and the communication cost many-fold. We take the following route to bypass the efficiency loss. Given $\nu$ strings for validity verification, we  compress them to one string via linear combination taken using a uniform random vector of length $\nu$ and then check the compressed string {\em only} for validity. We show that the compression process ensures that the output string will be a non-codeword with probability at least $\frac{1}{2}$ if  the input set contains some non-codeword(s). Below we present the randomized linearity test for $\nu$ strings in Fig~\ref{fig:RAND_LIN_TEST_many} and its probability analysis in Theorem~\ref{th:RanLinTest}. 

\begin{boxfig}{A Randomized Linearity Test for Many Strings}{RAND_LIN_TEST_many}
	\begin{center}
		\textbf{Randomized Linearity Test for $\nu$ Strings}
	\end{center}
	\begin{description}
		\item[--] \textbf{Input:} $\nu$ $\csec$-bit strings $\vy_1,\ldots,\vy_\nu$.
		
		\item[--] \textbf{Output:} $\accept$ or $\reject$ indicating whether the strings $\vy_{\subscript{1}},\ldots,\vy_{\subscript{\nu}}$ passes the test or not. 
	\end{description}
	\begin{enumerate}
		
		\item{\bf Selection of Random Combiners:} Choose $\nu$ bits  $b_1,\ldots,b_\nu$ uniformly at random.

		\item{\bf The test:} Compute $\vy = \Xor_{i=1}^{\nu} b_{\subscript{i}} \band \vy_{\subscript{i}}$.  Output $\accept$ if $\vy$ is a valid WH codeword, output $\reject$ otherwise.
		
	\end{enumerate} 
\end{boxfig}

\begin{theorem}
	\label{th:RanLinTest}
	Assume that some of the $\nu$  $\csec$-bit strings $\vy_1,\ldots,\vy_\nu$ are not WH codewords. The randomized linearity test presented in Fig~\ref{fig:RAND_LIN_TEST_many} outputs $\reject$ with probability at least $\frac{1}{2}$.
\end{theorem}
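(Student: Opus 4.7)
The plan is to exploit the linearity of the WH code (invoked in Section~\ref{linearity_test}): the set $\C_\WH^\csec$ is a linear subspace of $\bitset^\csec$, so for any $\vu,\vv \in \bitset^\csec$, if $\vu$ and $\vu\xor\vv$ both lie in $\C_\WH^\csec$, then $\vv = \vu\xor(\vu\xor\vv)$ must also lie in $\C_\WH^\csec$. This single fact is enough to drive the probability analysis.

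First I would single out a specific non-codeword. By hypothesis there exists some index $j\in[\nu]$ with $\vy_j\notin\C_\WH^\csec$. I then write
\[
\vy \;=\; \left(\Xor_{i\neq j} b_i\band\vy_i\right) \xor b_j\band\vy_j \;=\; \vz \xor b_j\band\vy_j,
\]
where $\vz$ is the partial combination of the remaining $\nu-1$ strings. The idea is to condition on the bits $\{b_i\}_{i\neq j}$, so that $\vz$ becomes a fixed (deterministic) vector, and to show that no matter what $\vz$ turns out to be, the marginal probability over $b_j$ that $\vy\in\C_\WH^\csec$ is at most $\tfrac12$.

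For the conditional step, observe that as $b_j$ ranges over $\{0,1\}$ the vector $\vy$ takes the two values $\vz$ and $\vz\xor\vy_j$. By the linearity of $\C_\WH^\csec$, these two vectors cannot both be WH codewords: if they were, their XOR $\vy_j$ would be a codeword, contradicting the choice of $j$. Hence at most one of the two values is a codeword, so $\Pr_{b_j}[\vy\in\C_\WH^\csec\mid \{b_i\}_{i\neq j}]\le \tfrac12$. Since this bound holds pointwise for every fixing of $\{b_i\}_{i\neq j}$, averaging over those bits preserves it, giving $\Pr[\vy\in\C_\WH^\csec]\le\tfrac12$, which is exactly the claim.

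I do not anticipate a substantive obstacle: the argument is a one-line use of the linear-subspace structure of WH, combined with a pairing trick on $b_j$. The only subtlety is to be careful that the probability bound is stated in the correct direction (probability of $\reject$ is at least $\tfrac12$, equivalently probability of $\accept$ is at most $\tfrac12$) and to note that the argument uses no property of the other $\vy_i$'s beyond the fact that they are binary vectors, so the bound is tight and cannot be improved without more structure.
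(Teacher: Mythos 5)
Your proof is correct and matches the paper's core argument: single out a non-codeword index and pair up outcomes of the random combination that differ only in the corresponding combiner bit, so that by linearity of the WH code at most one member of each pair can be a codeword. The paper carries this out via an explicit combinatorial pairing (splitting the $2^\eta$ assignments over the non-codeword indices into even and odd and mapping each $p$ to $p+1$), while you present the same idea more cleanly by conditioning on the other combiner bits and averaging over the last one; the underlying linearity fact and the resulting $\tfrac12$ bound are identical.
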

\begin{proof}Without loss of generality, let $i_1,\ldots,i_\eta$ denote the indices of the input strings that are non-codewords.  That is, $\{i_1,\ldots,i_\eta\} \subseteq \{1,\ldots, \nu\}$ and $\vy_{\subscript{i_1}},\ldots,\vy_{\subscript{i_\eta}}$ are exactly the non-codeword strings among the set of $\nu$ input strings. It is easy to verify that  any linear combination of the remaining strings that are codewords will result in a codeword. So we concentrate on the linear combination that can result from the non-codewords $\vy_{\subscript{i_1}},\ldots,\vy_{\subscript{i_\eta}}$.  Let the uniform random bits used to find the linear combination of the non-codewords be $b_{\subscript{i_1}},\ldots,b_{\subscript{i_\eta}}$. There are $2^{\eta}$ possibilities in total for these $\eta$ bits which can be interpreted as numbers in the set  $\{0,\ldots, 2^{\eta}-1\}$.  We divide these $2^{\eta}$ strings or numbers in two sets, say $\A$ and $\B$. $\A$ and $\B$ consist of all the strings that corresponds to even and odd numbers respectively from $\{0,\ldots,2^{\eta}-1\}$. Clearly $|\A| = |\B| = 2^{\eta-1}$. We now show that at least $2^{\eta-1}$ strings lead to a non-codeword when they are used as linear combiners for the set of non-codewords $\vy_{\subscript{i_1}},\ldots,\vy_{\subscript{i_\eta}}$.  We prove our claim by showing  that for every element in set $A$, there exists at least one unique string that when used for linear combination of the non-codewords will  lead to a non-codeword. Consider a string $\vw$ from set $A$. We have two cases to consider: \\~
	\noindent {\sf (i)} $\vw$ when used as the linear combiner for $\vy_{\subscript{i_1}},\ldots,\vy_{\subscript{i_\eta}}$ yields a non-codeword. In this case $\vw$ itself is the string and element in $A$ that when used as the linear combiner for the non-codewords will  lead to a non-codeword.\\~
	\noindent {\sf (ii)} $\vw$ when used as a linear combiner for $\vy_{\subscript{i_1}},\ldots,\vy_{\subscript{i_\eta}}$ yields a codeword. Note that $\vw$ is a string that denotes an even number, say $p$ in  $\{0,\ldots,2^{\eta}-1\}$. The least significant bit of $\vw$ is a zero.  The string corresponding to $p+1$ will belong to the set $\B$ and will have the same form as $\vw$ except that the least significant bit will be $1$.  The linear combination of  $\vy_{\subscript{i_1}},\ldots,\vy_{\subscript{i_{\eta-1}}}$ with respect to $\vw$ is a codeword.  We exclude  $\vy_{\subscript{i_\eta}}$ from the list since the least significant bit of $\vw$ is zero.  Whereas $\vy_{\subscript{i_\eta}}$ is a non-codeword and will be included in the linear combination with respect to the string corresponding to $p+1$.

	Clearly, the string corresponding to $p+1$ will lead to a non-codeword as the linear combination of a codeword and a non-codeword always gives a non-codeword. We have shown that for every $\vw$ that leads to a codeword, there is a unique string in $\B$ that leads to a non-codeword. The mapping is one-to-one.  
	
	We can now conclude that at least half the possibilities of  $b_{\subscript{i_1}},\ldots,b_{\subscript{i_\eta}}$ leads to a non-codeword when used as a linear combiner. Since the linear combiners are chosen uniformly at random, the probability that the linear combination that will result from the non-codewords $\vy_{\subscript{i_1}},\ldots,\vy_{\subscript{i_\eta}}$ is a non-codeword is at least $\frac{1}{2}$.  Recall that  any linear combination of the remaining strings that are codewords will result in a codeword.  So the compressed string $\vy$ resulted from the linear combination of all the $\nu$ strings will be a non-codeword with at least $\frac{1}{2}$ probability too. 
\end{proof}

It is easy to note that the above theorem holds true for $\puncture_\I(\C_\WH^\csec)$  for any $\I$ of size less than $\csec/2$. So we get the following corollary. 
\begin{corollary}
	\label{cor:RanLinTestpruned}
	Let $\I \subset [\csec]$ be a set of size less than $\csec/2$.  Assume that some of the $\nu$  $\csec - |\I|$-bit vectors $\vy_1,\ldots,\vy_\nu$ are not pruned WH codewords. Then $\vy \not \in \puncture_{\I}(\C_\WH^\csec)$ with probability at least $\frac{1}{2}$ where $\vy = \Xor_{i=1}^{\nu} b_{\subscript{i}} \band \vy_{\subscript{i}}$ and the bits $b_1,\ldots,b_\nu$ are uniform random. 
\end{corollary}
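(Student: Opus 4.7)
The plan is to reduce the corollary to the argument already used for Theorem~\ref{th:RanLinTest} by observing that the only two structural properties of $\C_\WH^\csec$ invoked in that proof are (i)~linearity and (ii)~the fact that the code is not all of $\F_2^\csec$ (equivalently, that codewords and non-codewords both exist). Both properties carry over verbatim to the pruned code $\puncture_\I(\C_\WH^\csec)$.

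First I would record the relevant linearity statement. Pruning the coordinates indexed by $\I$ is a coordinate projection $\pi_\I: \F_2^\csec \to \F_2^{\csec - |\I|}$, which is $\F_2$-linear. Since $\C_\WH^\csec$ is an $\F_2$-linear subspace of $\F_2^\csec$, the image $\puncture_\I(\C_\WH^\csec) = \pi_\I(\C_\WH^\csec)$ is an $\F_2$-linear subspace of $\F_2^{\csec - |\I|}$. Consequently, the XOR of two pruned codewords is a pruned codeword, and the XOR of a pruned codeword with a pruned non-codeword is a pruned non-codeword. Moreover, by the distance bound recalled in Section~\ref{linearity_test}, $\puncture_\I(\C_\WH^\csec)$ has distance $\csec/2 - |\I| \geq 1$, so it is a proper subspace and non-codewords exist.

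With these facts in hand, I would replay the proof of Theorem~\ref{th:RanLinTest} essentially verbatim. Let $\{i_1,\dots,i_\eta\} \subseteq [\nu]$ be the indices where $\vy_{i_j}$ is a pruned non-codeword; by hypothesis $\eta \geq 1$. Linear combinations of the remaining $\vy_i$'s are pruned codewords, so the question reduces to analysing the $2^\eta$ possible settings of $(b_{i_1},\dots,b_{i_\eta})$. Pair up these $2^\eta$ settings by flipping the coordinate $b_{i_\eta}$: within each pair, one combination includes $\vy_{i_\eta}$ and the other does not, so the two resulting combinations differ by the pruned non-codeword $\vy_{i_\eta}$. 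By the linearity observation above, exactly one element of each pair is a pruned codeword and the other is a pruned non-codeword. Hence at least $2^{\eta-1}$ of the $2^\eta$ choices yield a pruned non-codeword, so (averaging over the uniformly random bits $b_{i_1},\dots,b_{i_\eta}$, and using that the contributions from the codeword indices are deterministic pruned codewords) the overall combination $\vy$ lies outside $\puncture_\I(\C_\WH^\csec)$ with probability at least $\tfrac{1}{2}$.

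There is no genuine obstacle here: the only thing to be careful about is to justify the two code-theoretic properties used — linearity of $\puncture_\I(\C_\WH^\csec)$ and the fact that it is a proper subspace — since once these are in place the combinatorial pairing argument from Theorem~\ref{th:RanLinTest} transfers without modification. I would therefore keep the write-up short, stating the pruning-is-linear observation as a one-line lemma and then referring to the pairing argument of Theorem~\ref{th:RanLinTest}.
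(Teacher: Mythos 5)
Your overall approach — verify that the two structural facts used in Theorem~\ref{th:RanLinTest} (linearity of the code, existence of non-codewords because the pruned distance is $\csec/2 - |\I| \geq 1$) carry over to $\puncture_\I(\C_\WH^\csec)$ and then transfer the pairing argument — is exactly what the paper intends; the paper gives no proof of this corollary beyond the one-line remark ``it is easy to note that the above theorem holds true for $\puncture_\I(\C_\WH^\csec)$,'' so your write-up is if anything more explicit than the original.

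One sentence in your proposal is stated too strongly, though it does not break the bound. You assert that within each pair $(p, p+1)$ ``exactly one element of each pair is a pruned codeword and the other is a pruned non-codeword.'' That does not follow from linearity: knowing only that the two combinations XOR to the non-codeword $\vy_{i_\eta}$ rules out the possibility that \emph{both} are codewords (their XOR would then be a codeword), but it does not rule out the possibility that both are non-codewords — the XOR of two non-codewords need not be a codeword. The correct and sufficient statement, which is also what the case analysis in Theorem~\ref{th:RanLinTest} actually establishes, is that \emph{at most one} element of each pair is a pruned codeword, hence \emph{at least one} is a pruned non-codeword, giving at least $2^{\eta-1}$ bad settings of $(b_{i_1},\dots,b_{i_\eta})$ out of $2^\eta$. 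With that wording fixed, your argument matches the paper's and the corollary follows.
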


\subsection{Functionalities}
We describe the ideal functionalities that we need. Below we present  an  OT functionality parameterized using three parameters $\ell$ that denotes the string length of the sender's inputs, $n$ that refers to $1$-out-of-$n$ OTs and $m$ that denotes the number of instances of the OTs.

\begin{boxfig}{The Ideal Functionality for $\ot{\ell}{m}{n}$}{FOT}
	\begin{center}
		\textbf{Functionality $\FOT{\ell}{m}{n}$}
	\end{center}
	$\FOT{\ell}{m}{n}$ interacts with $\sen$, $\rec$ and the adversary $\Sim$ and is parameterized by three parameters $\ell$ that denotes the string length of the sender's inputs, $n$ that refers to $1$-out-of-$n$ OTs and $m$ that denotes the number of instances of the OTs. 
	\begin{itemize}
		\item Upon receiving $m$ tuples $\big\{\vset{\vx}{i,1}{i,n} \big\}_{i \in [m]}$ of $\ell$ bit strings from $\sen$ and $m$ selection integers $\vset{r}{1}{m}$ such that each $r_i \in [n]$ from $\rec$, the functionality  sends $\big\{\vx_{i,r_i} \big\}_{i \in [m]}$ to $\rec$. Otherwise it aborts.
	\end{itemize}
\end{boxfig}

Next we present a functionality to generate  uniformly random common coins. 
\begin{boxfig}{The Ideal Functionality for generating random common coins}{FCOIN}
	\begin{center}
		\textbf{Functionality $\FCOIN$}
	\end{center}
	$\FCOIN$ interacts with $\sen$, $\rec$ and the adversary $\Sim$. 
	\begin{itemize}
		\item Upon receiving $(\coin,\ell)$ from both $\sen$ and $\rec$,  the functionality generates $\ell$ random bits, say $\vw$ and sends $\vw$ to both $\sen$ and $\rec$. Otherwise it aborts.
	\end{itemize}
\end{boxfig}

\subsection{The Protocol}

We now describe the protocol that realizes $\ot{\ell}{m}{n}$ given ideal access to $\ot{\csec}{\csec}{2}$. The protocol is similar to the protocol of KK13 (cf. Fig. \ref{fig:KK13OTEXT}), except that our protocol includes a consistency check for preventing $\rec$ from behaving maliciously and using non-codewords in matrix $\mE$. The check makes use of the Randomized Linearity Testing described in Section \ref{linearity_test}. It is trivial to see that Randomized Linearity Test alone doesn't suffice, since a malicious $\rec$ can provide some vector for the check independent from what he had used in the seed OTs. Thus we need a check to ensure that the vector provided by $\rec$ for the check is consistent with the vectors used in the seed OTs. We make use of the fact that if both $\sen$ and $\rec$ are honest, then we have $\va_i = \vb_i \xor \left(\ve_i \band \vs \right)$. A closer analysis of this expression gives a simple verification mechanism for a corrupt $\rec$. Namely, $\rec$ sends to $\sen$ a random linear combination  of the rows of $\mB$ and $\mE$, say $\vb$ and $\ve$ respectively, for a commonly agreed random linear combiner generated using a coin tossing protocol. $\sen$ then applies the same random linear combiner on the rows of  $\mA$ to obtain $\va$ and checks if $\vb$ and $\ve$ are consistent with $\vs$ and $\va$. Namely, whether $\va = \vb \xor  ( \ve \band \vs)$ holds or not. While the above check is simple, it requires  communication of $\csec$-bit vectors, namely $\vb$ and $\ve$. The communication is brought down to $\Order(\log{\csec})$ using a couple of tricks. First, a second level of compression function is applied on $\va$, $\vb$ and $  \ve \band \vs$ via xor on the bits of the individual vectors. This results in three bits $a$, $b$ and $p$ respectively from $\va$, $\vb$ and $ \ve \band \vs$.  Then the check is simply to verify if $a = b \xor p$. Notice that $\ve \band \vs$  can be perceived as  the linear combination of $\ve$ for random combiner $\vs$.   Since $\vs$ is privy to $\sen$, $\rec$ cannot compute the linear combination of $ \ve \band \vs$, namely $p$.   So $\rec$ sends across the index of the codeword that matches with $\ve$ and on receiving it $\sen$ computes $p$ after computing  $\ve \band \vs$. The index requires just $\log{\csec}$ bits as ${\C}_{\WH}^\csec$ consists of $\csec$ codewords. Thus our final consistency check needs communication of $\Order(\log{\csec})$ bits and a sequence of cheap xor operations. Lastly, the above check is repeated $\ssec$ times, where $\ssec$ denotes the statistical security parameter. We show that either a corrupt $\rec$ tweaks few positions of the codewords allowing error-correction  or it is caught. Either event takes place  with overwhelming probability. Looking ahead to the proof,  the former event allows the simulator to extract the inputs of corrupted $\rec$ and thereby making the real and the ideal world indistinguishable with high probability. Whereas, the protocol is aborted in both the real and ideal worlds when the latter event happens.  Our construction appears in Fig.~\ref{fig:KK13OTEXTmal}.

\begin{boxfig*}{Actively Secure OT Extension Protocol}{KK13OTEXTmal}
	\begin{center}
		\textbf{Protocol for $\ot{\ell}{m}{n}$  from  $\ot{\csec}{\kappa}{2}$}
	\end{center}
	\begin{mydescription}
		\item[--] \textbf{Input of $\sen$:} $m$ tuples $\big\{\vset{\vx}{i,1}{i,n} \big\}_{i \in [m]}$ of $\ell$ bit strings.
		
		\item[--] \textbf{Input of $\rec$:} $m$ selection integers $\vset{r}{1}{m}$ such that each $r_i \in [n]$.
		
		\item[--] \textbf{Common Inputs:} A security parameter $\csec$ such that $\csec \geq n$, and Walsh-Hadamard code ${\C}_{\WH}^\csec = \vset{\vc}{1}{\csec}$.
		
		\item[--] \textbf{Oracles, Cryptographic Primitives and Functionalities:} A random oracle $H : [m] \cross \bool^\csec \to \bool^\ell $ and a pseudorandom generator $G : \bool^\csec \to \bool^{m+\ssec}$. An ideal OT functionality $\FOT{\csec}{\csec}{2}$ and an ideal coin tossing functionality $\FCOIN$. 
	\end{mydescription}
	\begin{enumerate}
		
		\item{\bf Seed OT Phase:} 
		\begin{enumerate}
			\item $\sen$ chooses $\vs \from \bool^\csec$ at random. 
			\item $\rec$ chooses $\csec$ pairs of seeds $(\vk^0_j,\vk^1_j)$ each of length $\csec$.
			\item $\sen$ and $\rec$ interact with $\FOT{\csec}{\csec}{2}$ in the following way.  
			\begin{mydescription}
				\item[--] $\sen$ acts as {\em receiver} with input $\vs$.
				\item[--] $\rec$ acts as {\em sender} with input $\big\{(\vk^0_i,\vk^1_i) \big\}_{i \in [\csec]}$.
				\item[--] $\sen$ receives output $\{\vk^{s_i}_i\}_{i \in [\csec]}$.
			\end{mydescription}
			

		\end{enumerate}  
		\item{\bf OT Extension Phase I:} 
		\begin{enumerate}
			\item $\rec$ forms three $(m +\ssec) \cross \csec$ matrices $\mB$, $\mE$ and $\mD$ in the following way and sends $\mD$ to $\sen$:
			\begin{mydescription}
				\item[--] Set $\vb^j = G(\vk^0_j)$. 
				\item[--] Set  $\ve_i = \vc_\subscript{r_i}$ for $i \in [m]$. For $i \in [m+1, m+\ssec]$, set   $\ve_i$ to a randomly picked codeword from  ${\C}_{\WH}^\csec$.
				\item[--] Set  $\vd^j = \vb^j \xor G(\vk^1_j) \xor \ve^j$. 
			\end{mydescription}
			\item On receiving $\mD$, $\sen$ forms $(m +\ssec) \cross \csec$ matrix $\mA$ with the $j$th column  of $\mA$ set as $ \va^j = \left( s_j \band \vd^j \right) \xor G(\vk^{s_j}_j)$. Clearly,  {\sf (i)} $\va^j = \big(\vb^j \xor (s_j \band \ve^j)\big)$ and {\sf (ii)} $\va_i = \big(\vb_i \xor (\vs \band \ve_i) \big) = \big(\vb_i \xor (\vs \band \vc_\subscript{r_i}) \big)$. 
		\end{enumerate}
		
		\item{\bf Checking Phase:} 
		\begin{enumerate}
			\item $\sen$ and $\rec$ invoke $\FCOIN$ with $(\coin,\ssec(m+\ssec))$ and receives an $\ssec$ $(m + \ssec)$-length random bit vectors say $\vw^{(1)},\ldots,\vw^{(\ssec)}$. On receiving the vectors, the parties do the following for $l \in [\ssec]$:
			\begin{mydescription}
				\item[--]  $\rec$ computes $\vb^{(l)} = \Xor\limits_{i=1}^{m+\ssec} w_{\subscript{i}}^{(l)} \band \vb_{\subscript{i}}$, $\ve^{(l)} = \Xor\limits_{i=1}^{m+\ssec} w_{\subscript{i}}^{(l)} \band \ve_{\subscript{i}}$ and $b^{(l)} = \Xor\limits_{i=1}^{\csec} b_{\subscript{i}}^{(l)}$. It sends $b^{(l)}$ and $\alpha^{(l)}$ where $\ve^{(l)} = \vc_{\alpha^{(l)}}$ to $\sen$.
				\item[--] $\sen$ computes $\va^{(l)} = \Xor\limits_{i=1}^{m+\ssec} w_{\subscript{i}}^{(l)} \band \va_{\subscript{i}}$, $a^{(l)} = \Xor\limits_{i=1}^{\csec} a_{\subscript{i}}^{(l)}$, $\vp^{(l)} =  \vs \band \vc_{\alpha^{(l)}}$ and  $p^{(l)} = \Xor\limits_{i=1}^{\csec} p^{(l)}_i$.  It  aborts the protocol  if  $a^{(l)} \not = b^{(l)} \xor p^{(l)}$.
			\end{mydescription}
			
		\end{enumerate}
		\item{\bf OT Extension Phase II:} 
		\begin{enumerate}
			\item For every $i \in [m]$, $\sen$ computes $\vy_{i,j} = \vx_{i,j} \xor H\big(i, \va_i \xor (\vs \band \vc_j )\big)$ and sends $\{\vy_{i,j}\}_{j \in [n]}$.
			
			\item For every $i \in [m]$, $\rec$ recovers $\vz_i = \vy_\subscript{i,r_i} \xor H(i,\vb_i)$.
		\end{enumerate}   
	\end{enumerate} 
\end{boxfig*}

\subsubsection{Security}
The correctness of our protocol follows from the correctness of the KK13 protocol and the correctness of the consistency check. While the former is explained in Section~\ref{sec:kk13}, the latter is explained below. The linearly combined vectors $\ve^{(l)}$ for $l \in [\ssec]$ will be  valid codewords follows directly from the  linearity of WH code as mentioned in Section \ref{linearity_test}. When  $\rec$ is honest we have $\va_i = \vb_i \xor \left(\ve_i \band \vs \right)$ and $\vc_{\alpha^{(l)}} = \ve^{(l)}$ for $l \in [\ssec]$. Thus, for every $l \in [\ssec]$, 
\begin{align*}	
\va^{(l)} &= \Xor\limits_{i=1}^{m+\ssec} w_{\subscript{i}}^{(l)} \band \va_{\subscript{i}} = \Xor\limits_{i=1}^{m+\ssec} w_{\subscript{i}}^{(l)} \band \left[ \vb_i \xor \left(\ve_i \band \vs \right) \right] \\
&= \left[ \Xor\limits_{i=1}^{m+\ssec} w_{\subscript{i}}^{(l)} \band \vb_i \right] \xor \left[ \left( \Xor\limits_{i=1}^{m+\ssec} w_{\subscript{i}}^{(l)} \band \ve_i \right) \band \vs \right] \nonumber \\
&= \vb^{(l)} \xor (\ve^{(l)} \band \vs) \nonumber \\
a^{(l)}  &= \Xor\limits_{i=1}^{\csec} a_{\subscript{i}}^{(l)} =  \Xor\limits_{i=1}^{\csec} (b_{\subscript{i}}^{(l)} \xor  (s_i \band e_i^{(l)})) = b^{(l)} \xor p^{(l)}  \nonumber
\end{align*}
Now it is easy to verify that the protocol is correct (i.e., $\vz_i = \vx_{i,r_i}$) when both the parties follow the protocol.

We now move on to the security argument for our protocol. The original OT extension of~\cite{KolesnikovK13} provides security against a malicious $\sen$. Since our consistency check involves message communication from $\rec $ to $\sen$, it does not offer any new scope for a malicious sender to cheat. However, the check may reveal some information about $\rec$'s input. Recall that $\rec$'s input is encoded in the rows of matrix $\mE$ and during the check, a random linear combination of the rows of $\mE$ (where the combiner is known to $\sen$) is presented to $\sen$ for verification. The check is repeated for $\ssec$ times. To prevent information leakage on $\rec$'s input,  $\mE$ is padded with $\ssec$ extra rows consisting of random codewords. This ensures that the linear combination presented in an instance of the check will look random and will bear no information about the $m$ rows of $\mE$ that encode $\rec$'s input, unless the  bits of the random combiner corresponding to the padded $\ssec$ rows are zero. However, the probability of that happening is only $\frac{1}{2^{\ssec}}$.

A corrupt $\rec$ can cheat by not picking the rows of $\mE$  as  codewords. Our consistency check ensures an overwhelming probability for catching such a misconduct when `large' number of positions in the codewords are tweaked. If few positions are tweaked, then we show that the tweaked codewords are error-correctable with high probability allowing the simulator in the proof to extract input of the corrupt $\rec$. We now prove security formally. 

\begin{theorem}\label{otext_proof}
	The protocol in Fig.~\ref{fig:KK13OTEXTmal} securely realizes $\FOT{\ell}{m}{n}$ in the ($\FOT{\csec}{\csec}{2},\FCOIN$)-hybrid model.
\end{theorem}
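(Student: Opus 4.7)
The plan is to construct simulators $\Sim_\sen$ and $\Sim_\rec$ for the two corruption cases separately and then argue indistinguishability via standard hybrid arguments in the $(\FOT{\csec}{\csec}{2}, \FCOIN)$-hybrid model with a random oracle $H$. The corrupt-sender case is the easy direction since the checking phase carries only $\rec \to \sen$ messages, so the new protocol inherits its sender-side security from KK13; the corrupt-receiver case is where the new analysis lives.

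For the \textbf{corrupt sender}, $\Sim_\sen$ plays the role of an honest receiver with a dummy input vector (say $r_i = 1$ for all $i$) and extracts $\vs$ at the moment $\sen$ submits it to $\FOT{\csec}{\csec}{2}$. With $\vs$ and the internally sampled $\mB, \mE$ in hand, $\Sim_\sen$ runs the honest receiver code through the end of the checking phase (which always passes when the receiver is honest), and upon receiving $\{\vy_{i,j}\}$ from $\sen$ recovers $\vx_{i,j} = \vy_{i,j} \oplus H\big(i, \va_i \oplus (\vs \band \vc_j)\big)$ for every $i,j$ and forwards the $m$ input tuples to $\FOT{\ell}{m}{n}$. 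The transcript produced by $\Sim_\sen$ is distributed identically to the real one, so indistinguishability is immediate.

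For the \textbf{corrupt receiver}, $\Sim_\rec$ samples $\vs$ uniformly, extracts both seed keys $(\vk^0_j, \vk^1_j)$ from $\FOT{\csec}{\csec}{2}$, and upon receiving $\mD$ reconstructs $\mB$ and $\mE$ explicitly. It runs the checking phase exactly as the honest $\sen$ would, aborting iff $a^{(l)} \neq b^{(l)} \oplus p^{(l)}$ for some $l$; otherwise it decodes each row $\ve_i$ to its (claimed) nearest WH codeword $\vc_{r_i}$, forwards $\{r_i\}$ to $\FOT{\ell}{m}{n}$ to obtain $\{\vx_{i,r_i}\}$, computes $\vy_{i,r_i} = \vx_{i,r_i} \oplus H(i, \vb_i)$, and fills the remaining $\vy_{i,j}$ with uniformly random $\ell$-bit strings. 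Indistinguishability is then established by a hybrid that replaces the real sender's ciphertexts $\vy_{i,j}$ for $j \neq r_i$ with uniform strings; in the random-oracle model this gap is bounded by the probability that $\rec$ ever queries $H$ on $\big(i, \va_i \oplus (\vs \band \vc_j)\big)$ for $j \neq r_i$, each of which contains at least $\csec/2$ unknown bits of $\vs$ by the WH distance bound (Theorem in Section~\ref{sec:RO}), giving a $\negl(\csec)$ union bound over all \ppt\ queries.

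The \textbf{main obstacle} is the extractability lemma: conditioned on the checking phase passing, except with probability negligible in $\ssec$, every row of $\mE$ is within the unique-decoding radius of some WH codeword, so $\Sim_\rec$'s decoding succeeds. I plan to prove this with a per-iteration analysis. If some row of $\mE$ is bad, meaning it is so far from every codeword that even after puncturing on a set $\I$ of size $< \csec/2$ it is still not a pruned codeword, then by Corollary~\ref{cor:RanLinTestpruned} the random combination $\ve^{(l)} = \Xor_i w_i^{(l)} \band \ve_i$ is a non-codeword with probability at least $\tfrac12$; in that event $\rec$ is forced to send some $\alpha^{(l)}$ with $\vc_{\alpha^{(l)}} \neq \ve^{(l)}$, and the final scalar check $a^{(l)} = b^{(l)} \oplus p^{(l)}$ reduces to an equation of the form $\Xor_{i \in \Delta} s_i = 0$ on the non-empty symmetric difference $\Delta$ of the two vectors. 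Because $\vs$ is uniform and information-theoretically hidden from $\rec$ (it is $\sen$'s seed-OT choice string), this equation fails with probability at least $\tfrac12$, so each iteration aborts with probability $\geq \tfrac14$; the independence of the $\ssec$ iterations, coming from independent $\FCOIN$ samples, then drives the overall abort probability to $1 - (3/4)^\ssec$. The remaining bookkeeping is that the $\ssec$ padding rows of random codewords appended to $\mE$ ensure the check transcript $(\alpha^{(l)}, b^{(l)})$ leaks no information about $r_1,\dots,r_m$, since the random combiner hits a padding row with probability $1 - 2^{-\ssec}$, masking the useful content.
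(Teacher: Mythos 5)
Two substantive problems appear in the corrupt-receiver simulation, and both matter. First, your $\Sim_\rec$ sets $\vy_{i,r_i} = \vx_{i,r_i} \oplus H(i, \vb_i)$, which is the \emph{honest-receiver} pad. In the real protocol the sender derives $\va_i = \vb_i \oplus (\vs \band \ve_i)$ from whatever row $\ve_i$ the receiver actually committed, and sends $\vy_{i,r_i} = \vx_{i,r_i} \oplus H\big(i, \va_i \oplus (\vs \band \vc_{r_i})\big) = \vx_{i,r_i} \oplus H\big(i, \vb_i \oplus (\vs \band (\ve_i \oplus \vc_{r_i}))\big)$. If the adversary commits $\ve_i = \vc_{r_i} \oplus \vf$ with $2 \le |\vf| < \csec/2$, it passes the checking phase with probability roughly $1/2$: every iteration in which the bad row is hit imposes the \emph{same} one-bit parity constraint $\Xor_{j\in\text{supp}(\vf)} s_j = b^{(l)}\oplus\bar b^{(l)}$, and the adversary has a single bit to guess. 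Conditioned on passing, $\vs \band \vf$ is a uniform even-weight restriction, almost surely nonzero, so your pad disagrees with the real one; the environment, which chooses $\vx_{i,r_i}$ and sees $\vb_i$ from the corrupt receiver, can query $H(i,\vb_i)$ and distinguish. The paper's $\Sim_\rec$ avoids this entirely by reconstructing $\mE$ and $\va_i$ (it knows $\vs$ and both seeds) and then running the honest sender code with the extracted $\{\vx_{i,r_i}\}$ and $0^\ell$ substituted for the rest, so its pads match the real protocol bit-for-bit.

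Second, your extractability lemma — that a passing check forces every row within the unique-decoding radius except with probability $\negl(\ssec)$ — is false, and the same adversary refutes it. With $|\vf| = \csec/4$ the row $\ve_i$ sits at distance exactly $\csec/4$ from $\vc_{r_i}$ and from at least one other codeword (the code distance is exactly $\csec/2$, so a midpoint exists), strictly outside unique-decoding territory, yet the check passes with probability $\ge 1/2$. The paper does not use nearest-codeword decoding: it computes $\Fault = \bigcup_l \hdi(\ve^{(l)}, \vc_{\alpha^{(l)}})$ from the transcript and decodes through the punctured code $\puncture_\Fault(\C_\WH^\csec)$, which correctly recovers $r_i$ in this example because $\puncture_\Fault(\ve_i) = \puncture_\Fault(\vc_{r_i})$. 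Your own later phrasing — a row that is ``not a pruned codeword even after puncturing any $\I$ with $|\I| < \csec/2$'' — is a strictly \emph{weaker} failure condition than ``outside the unique-decoding radius,'' so your per-iteration argument, even if completed, would not establish the lemma you state. The paper instead splits by $|\Fault| \ge \csec/2$ (passing probability at most $2^{-|\Fault|}$) versus $|\Fault| < \csec/2$ (decoding-failure probability bounded via Corollary~\ref{cor:RanLinTestpruned}); see Lemmas~\ref{BigFault} and \ref{SmallFault}. Your per-iteration abort bound of $1/4$, incidentally, is what the paper's implementation parameters reflect. Finally, for the corrupt sender the dummy-input transcript is only \emph{computationally} indistinguishable from the real one — the PRG masking of the unreceived seed in $\mD$ is a genuine reduction step, not an identical distribution — and the paper makes that PRG step explicit by sampling $\mD$ uniformly.
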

\begin{proof} Our proof is presented in Universal Composability (UC) framework recalled briefly in Appendix~\ref{sec:UC}. To prove the security of our protocol, we  describe two simulators. The simulator $\Sim_\sen$ simulates the view of a corrupt sender and appears in Fig.~\ref{fig:SIM_MAL_SND}. On the other hand, the simulator  $\Sim_\rec$ simulates the view of a corrupt receiver and is presented in Fig.~\ref{fig:SIM_MAL_REC}. 
	\begin{boxfig*}{Simulator  $\Sim_\sen$ for Malicious Sender}{SIM_MAL_SND}
		\begin{center}
			\textbf{Simulator  $\Sim_\sen$  for $\sen$}
		\end{center}
		The simulator  plays the role of the honest $\rec$ and  simulates  each step of the protocol $\ot{\ell}{m}{n}$  as follows.  The communication of the $\Env$ with the adversary  $\Adv$ who corrupts $\sen$ is handled as follows: Every input value received by the simulator from $\Env$ is written on $\Adv$'s input tape.  Likewise, every output value written by $\Adv$ on its output tape is copied to the simulator's  output tape (to be read by the environment $\Env$). 
		\begin{enumerate}
			\item{\bf Seed OT Phase:}  On behalf of $\FOT{\csec}{\csec}{2}$, $\Sim_\sen$ receives $\vs$,  the input of $\sen$ to the functionality $\FOT{\csec}{\csec}{2}$. Next it picks $\csec$  PRG seeds $\vk_i$ each of length $\csec$ and sends $\big\{\vk_i\big\}_{i \in [\csec]}$ to $\sen$ on behalf of $\FOT{\csec}{\csec}{2}$.

			\item{\bf OT Extension Phase I:} $\Sim_\sen$ picks a  $(m +\ssec) \cross \csec$ matrix $\mD$ uniformly at random and sends to $\sen$.  It then computes matrix $\mA$ using the PRG seeds sent to $\sen$, $\vs$ and $\mD$. Namely, it sets  $ \va^j = \left( s_j \band \vd^j \right) \xor G(\vk_j)$.

			\item{\bf Checking Phase:}  On receiving $(\coin,\ssec(m+\ssec))$ from $\sen$ on behalf of $\FCOIN$, $\Sim_\sen$ sends $\ssec$ $(m + \ssec)$-length random bit vectors say $\vw^{(1)},\ldots,\vw^{(\ssec)}$.  For $l \in [\ssec]$, it then computes $\va^{(l)}$ and $a^{(l)}$ using  $\vw^{(l)}$ and $\mA$ just as an honest $\sen$ does. It chooses a random WH codeword, say $\ve^{(l)}$,  sets $\vb^{(l)} = \va^{(l)} \xor (\vs \band \ve^{(l)})$ and computes $b^{(l)}$ using $\vb^{(l)}$. Finally, it sends $\alpha^{(l)}$, the index of $\ve^{(l)}$ and $b^{(l)}$ to  $\sen$.

			\item{\bf OT Extension Phase II:} On receiving $\big\{\vset{\vy}{i,1}{i,n} \big\}_{i \in [m]}$ from $\sen$, $\Sim_\sen$ computes $\vx_{i,j} = \vy_{i,j} \xor H\big(i, \va_i \xor (\vs \band \vc_j )\big)$ for $1 \leq i \leq m$ and sends $\big\{\vset{\vx}{i,1}{i,n} \big\}_{i \in [m]}$ to functionality $\FOT{\ell}{m}{n}$ on behalf of $\sen$.		
		\end{enumerate} 
	\end{boxfig*}
	
	We now prove that $\Ideal_{\FOT{\ell}{m}{n}, \Sim_\sen,\Env}  \stackrel{c}{\approx}  \Real_{\ot{\ell}{m}{n}, \Adv,\Env}$ when $\Adv$ corrupts $\sen$.  In ($\FOT{\csec}{\csec}{2},\FCOIN$)-hybrid model, we note that the difference between the simulated and the real view lies in  $\mD$ matrix. In the simulated world, the matrix $\mD$ is a random matrix, whereas in the real world it is a pseudo-random matrix. The indistinguishability can be proved via a reduction to PRG security.  
	
	\begin{boxfig*}{Simulator $\Sim_\rec$ for Malicious Receiver}{SIM_MAL_REC}
		\begin{center}
			\textbf{Simulator $\Sim_\rec$ for $\rec$.}
		\end{center}
		The simulator  plays the role of the honest $\sen$ and  simulates  each step of the protocol $\ot{\ell}{m}{n}$  as follows.  The communication of the $\Env$ with the adversary  $\Adv$ who corrupts $\rec$ is handled as follows: Every input value received by the simulator from $\Env$ is written on $\Adv$'s input tape.  Likewise, every output value written by $\Adv$ on its output tape is copied to the simulator's  output tape (to be read by the environment $\Env$). 
		\begin{enumerate}
			\item{\bf Seed OT Phase:} On behalf of $\FOT{\csec}{\csec}{2}$, $\Sim_\rec$ receives the input of $\rec$ to the functionality, namely $\big\{(\vk^0_i,\vk^1_i) \big\}_{i \in [\csec]}$. 
			
			\item{\bf OT Extension Phase I:}  On receiving matrix $\mD$ from $\rec$, $\Sim_\rec$ computes $\mE$ using the knowledge of  $\big\{(\vk^0_i,\vk^1_i) \big\}_{i \in [\csec]}$.  That is, it computes $\mE$ as $\ve^i = G(\vk^0_i) \xor G(\vk^1_i) \xor \vd^i$, where $i \in [\csec]$.	 
			
			\item{\bf Checking Phase:}  On receiving $(\coin,\ssec(m+\ssec))$ from $\rec$ on behalf of $\FCOIN$, $\Sim_\rec$ sends $\ssec$ $(m + \ssec)$-length random bit vectors say $\vw^{(1)},\ldots,\vw^{(\ssec)}$ to $\rec$. Then $l \in [\ssec]$, it receives $b^{(l)}$ and $\alpha^{(l)}$ from $\rec$ and performs the consistency check honestly like an honest $\sen$. If the check fails, then it sends $\abort$ to $\FOT{\ell}{m}{n}$ on behalf of $\rec$  and halts.  If none of the check fails,  $\Sim_\rec$  computes $\ve^{(l)} = \Xor\limits_{i=1}^{m+\ssec} w_{\subscript{i}}^{(l)} \band \ve_{\subscript{i}}$ using the rows of $\mE$ and finds  $\Fault_i = \hdi(\ve^{(l)}, \vc_{\alpha^{(l)}})$ for $l \in [\ssec]$. It then computes $\Fault = \bigcup \limits_{l=1}^{\ssec}  \Fault_l$.  If $|\Fault| \geq \csec/2$ or there exists an index $i$ such that $\puncture_\Fault(\ve_i) \not \in \puncture_\Fault({\C}_{\WH}^\csec)$\footnote{Note that $\puncture_\Fault({\C}_{\WH}^\csec)$ consists of $\csec$ vectors with distance $\csec/2 - |\Fault|$ which is at least one when $|\Fault| < \csec /2$. This follows from the fact that  the distance of $\C_{\WH}^\csec$ is $\csec/2$.}, then  it sends $\abort$ to $\FOT{\ell}{m}{n}$. Otherwise,  $\Sim_\rec$ extracts the $i$th input of $\rec$ as $r_i$ where  $\puncture_\Fault(\ve_i) =  \puncture_\Fault(\vc_{r_i})$ for $i \in [m]$.

			\item{\bf OT Extension Phase II:}  $\Sim_\rec$  sends  the input of $\rec$, namely $\vset{r}{1}{m}$ (such that each $r_i \in [n]$) to  functionality $\FOT{\ell}{m}{n}$ on behalf of $\rec$. From $\FOT{\ell}{m}{n}$, it receives $\big\{\vx_{i,r_i} \big\}_{i \in [m]}$. It then  runs the protocol with $\rec$ using $\big\{\vx_{i,r_i} \big\}_{i \in [m]}$ and  $0^\ell$ for the unknown inputs $\big\{ \vx_{i,j} \big\}_{i \in [m] \wedge j \neq r_i}$. 
		\end{enumerate}
	\end{boxfig*}
	Next, we prove that $\Ideal_{\FOT{\ell}{m}{n}, \Sim_\rec,\Env}  \stackrel{c}{\approx}   \Real_{\ot{\ell}{m}{n}, \Adv,\Env}$ when $\Adv$ corrupts $\rec$ via a series  of hybrids. The output of each hybrid is always just the output of the environment $\Env$.   Starting with $\Hyb_0=   \Real_{\ot{\ell}{m}{n}, \Adv,\Env}$, we gradually make changes to define $\Hyb_1$ and $\Hyb_2$ as follows:
	
	\begin{mydescnoindent}
		\item[$\Hyb_1$:] Same as $\Hyb_0$, except that in the {\bf Checking Phase}, the protocol is aborted when the simulator  $\Sim_\rec$ fails to extract the input of $\rec$.
		\item[$\Hyb_2$:] Same as $\Hyb_1$, except that  the default value $0^\ell$ is substituted for the inputs $\big\{ \vx_{i,j} \big\}_{i \in [m] \wedge j \neq r_i}$.
	\end{mydescnoindent}
	
	Clearly,  $\Hyb_2 = \Ideal_{\FOT{\ell}{m}{n}, \Sim_\rec,\Env}$.	 Our proof will conclude, as we show that every two consecutive hybrids  are computationally indistinguishable. \\\\~
	\noindent $\Hyb_0  \stackrel{c}{\approx}  \Hyb_1$:   The difference between $\Hyb_0$ and $\Hyb_1$ lies in the condition on aborting the protocol. 
	In $\Hyb_0$ the protocol is aborted when   $a^{(l)} \not = b^{(l)} \xor p^{(l)}$ for some $l \in [\ssec]$  (cf. Fig.~\ref{fig:KK13OTEXTmal}). Whereas, in $\Hyb_1$ the protocol is aborted when either  the condition for abortion in $\Hyb_0$ is true or  the extraction fails. The latter implies that either $|\Fault| \geq \csec/2$ or there exist an index $i$ such that $\puncture_\Fault(\ve_i) \not \in \puncture_\Fault({\C}_{\WH}^\csec)$.  Let $\PassCheck$ denote the event of passing the consistency check for a corrupt $\rec$ who commits a non-codeword matrix $\mE$ in the seed OT phase.  Let $\DecodeFail$ denote the event of failed extraction of $\rec$'s input. Lastly, let $\Dist$ denote the event that $\Env$ distinguishes between $\Hyb_0  $ and  $ \Hyb_1$. Then, we have $\prob[\Dist~|~ \neg \PassCheck] = 0$ (since the execution aborts in both hybrids) and $\prob[\Dist~| ~\PassCheck] = \prob[\DecodeFail~| ~\PassCheck]$. So we have,
	\begin{align}\label{eq1}
	\prob[\Dist] &= \prob[\Dist~| ~\PassCheck] \cdot \prob[\PassCheck] + \prob[\Dist~|~ \neg \PassCheck] \cdot \prob[\neg \PassCheck] \\ \nonumber
	&= \prob[\DecodeFail~ |~ \PassCheck] \cdot \prob[\PassCheck] 
	\end{align}
	We now show that $\prob[\Dist] $ is negligible in $\csec$ and $\ssec$  because either the probability of passing the check is negligible or the probability of failure in extraction when check has passed is negligible. In other words, we show that $\prob[\PassCheck] \leq \negl(\csec,\ssec)$ when $|\Fault| \geq \csec /2$ and $\prob[\DecodeFail~ |~ \PassCheck]  \leq \negl(\csec,\ssec)$ otherwise. We capture the above in the following two lemmas. 
	\begin{lemma}\label{BigFault}
		$\prob[\Dist] \leq \max(\frac{1}{2^{|\Fault|}},  \frac{1}{2^\ssec})$, when   $|\Fault| \geq \csec /2$. 
	\end{lemma}
	\begin{proof}
		When $|\Fault| \geq \csec /2$, we note that  $\prob[\DecodeFail~ |~ \PassCheck] = 1$ as the extraction always fails.  Plugging the equality in Equation~\ref{eq1}, we get  $\prob[\Dist]  = \prob[\PassCheck]$. Next we conclude the proof by showing that $ \prob[\PassCheck] = \max( \frac{1}{2^{|\Fault|}}, \frac{1}{2^\ssec})$ which is negligible in $\csec$ and $\ssec$.

		Consider $l$th iteration of the check in {\bf Checking Phase}. Recall that $a^{(l)}$ at $\sen$'s end is computed as follows. First $\va^{(l)}$ is calculated as  $\va^{(l)} = \Xor\limits_{i=1}^{m+\ssec} w_{\subscript{i}}^{(l)} \band \va_{\subscript{i}}$ where $\va_i = \big(\vb_i \xor (\vs \band \ve_i) \big)$. Denoting $\vb^{(l)} = \Xor\limits_{i=1}^{m+\ssec} w_{\subscript{i}}^{(l)} \band \vb_{\subscript{i}}$ and $\ve^{(l)} = \Xor\limits_{i=1}^{m+\ssec} w_{\subscript{i}}^{(l)} \band \ve_{\subscript{i}}$, we have $\va^{(l)} = \vb^{(l)} \xor (\vs \band \ve^{(l)})$. Lastly, denoting $b^{(l)} = \Xor\limits_{i=1}^{\csec} b_{\subscript{i}}^{(l)}$ and $p =  \Xor\limits_{i=1}^{\csec} s_{\subscript{i}} \band e_i^{(l)}$,  we have $a^{(l)} = \Xor\limits_{i=1}^{\csec} a_{\subscript{i}}^{(l)} =  b^{(l)} \xor p^{(l)}$. Let  a corrupt $\rec$ sends the index $\alpha^{(l)}$. Let $\bar b^{(l)}$ denote the bit sent along with $\alpha^{(l)}$  and let  $\bar p^{(l)} =  \Xor\limits_{i=1}^{\csec} s_{\subscript{i}} \band c_i^{(l)}$ where $\vc_{\alpha^{(l)}} = [c_1^{(l)},\ldots,c_\csec^{(l)}]$.   Now the check passes when  $b^{(l)} \xor p^{(l)} =  \bar b^{(l)} \xor \bar p^{(l)}$. The equation implies  that
		\begin{eqnarray} \nonumber
		b^{(l)} \xor \bar b^{(l)} &=&  p^{(l)} \xor \bar p^{(l)}   = \left( \Xor\limits_{i=1}^{\csec} s_{\subscript{i}} \band e_i^{(l)} \right) \Xor \left( \Xor\limits_{i=1}^{\csec} s_{\subscript{i}} \band c_i^{(l)} \right) \\ \nonumber
		& = &  \Xor\limits_{i=1}^{\csec} s_{\subscript{i}} \band (e_i^{(l)} \xor c_i^{(l)})
		\end{eqnarray}
		Now note that the bits of $\vs$ corresponding to the indices {\em not} in $\Fault$ do not have any impact on the value of $\Xor\limits_{i=1}^{\csec} s_{\subscript{i}} \band (e_i^{(l)} \xor c_i^{(l)})$. So $2^{\csec -|\Fault|}$ possibilities of the vector $\vs$ will lead to passing the check. Since $\vs$ is  chosen uniformly at random and is a $\csec$-length bit vector, the probability that the chosen vector will hit one of the $2^{\csec -|\Fault|}$ possibilities is $\frac{2^{\csec -|\Fault|}}{2^\csec}$.   The probability of passing the check is thus $\frac{1}{2^{|\Fault|}}$. Another way of passing the check is to hit the value of $\bar b^{(l)}$ in all the $\ssec$ instances of the check so that the equalities $b^{(l)} \xor p^{(l)} =  \bar b^{(l)} \xor \bar p^{(l)} $ for $l \in [\ssec]$ hold good. The probability of passing the check in this way thus turns out to be  $\frac{1}{2^{\ssec}}$. This concludes the proof.
		
	\end{proof}
	
	\begin{lemma}\label{SmallFault}
		$\prob[\Dist] \leq   \frac{1}{2^\ssec}$, when   $|\Fault| < \csec /2$. 
	\end{lemma}
	\begin{proof}
		From Equation~\ref{eq1}, we get  the inequality  $\prob[\Dist]  \leq \prob[\DecodeFail~ |~ \PassCheck] $. We now show that $\prob[\DecodeFail~ |~ \PassCheck] \leq  \frac{1}{2^\ssec}$.  We note that when  $|\Fault| < \csec /2$, the reason for failure in extraction happens because some of the pruned rows of $\mE$ do not belong to the the pruned code  $\puncture_\Fault({\C}_{\WH}^\csec)$.  That is, there exists an index $i$ such that $\puncture_\Fault(\ve_i) \not \in \puncture_\Fault({\C}_{\WH}^\csec)$. Now the fact that the distance of $\C_{\WH}^\csec$ is $\csec/2$ and the number of indices that are pruned are strictly less that $\csec/2$ implies that $\puncture_\Fault({\C}_{\WH}^\csec)$ consists of $\csec$ vectors with distance $\csec/2 - |\Fault|$ which is at least one. Now Corollary~\ref{cor:RanLinTestpruned} implies that  if some of the pruned rows of $\mE$ do not belong to $ \puncture_\Fault({\C}_{\WH}^\csec)$, then $\puncture_\Fault(\ve^{(l)})$ belongs to $ \puncture_\Fault({\C}_{\WH}^\csec)$ with probability at most $1/2$.    Since $\ve^{(l)}$s are computed using independent and uniformly picked random linear combiners, at least one of $\puncture_\Fault(\ve^{(1)}),\ldots, \puncture_\Fault(\ve^{(\ssec)})$  do not belong to  $ \puncture_\Fault({\C}_{\WH}^\csec)$ with probability at least $1-\frac{1}{2^\ssec}$.  Recall that $\ve^{(l)}$ is computed using $\vw^{(l)}$. But since $\puncture_\Fault(\ve^{(l)}) = \puncture_\Fault(\vc_{\alpha^{(l)}})$ and $ \puncture_\Fault(\vc_{\alpha^{(l)}}) \in \puncture_\Fault({\C}_{\WH}^\csec)$ for all $l \in [\ssec]$, it implies that $\puncture_\Fault(\ve^i)$ for all $i \in [m+\ssec]$ belong to  $ \puncture_\Fault({\C}_{\WH}^\csec)$ with probability at least $1-\frac{1}{2^\ssec}$. So we have $\prob[\neg \DecodeFail~ |~ \PassCheck] \geq 1- \frac{1}{2^\ssec}$ which implies  $\prob[ \DecodeFail~ |~ \PassCheck] \leq  \frac{1}{2^\ssec}$
	\end{proof}\\~ 

	\noindent $\Hyb_1  \stackrel{c}{\approx}  \Hyb_2$: The difference between $\Hyb_1$ and $\Hyb_2$ lies in the values for the inputs $\big\{ \vx_{i,j} \big\}_{i \in [m] \wedge j \neq r_i}$. In $\Hyb_1$ these values are the real values of an honest $\sen$ whereas  in $\Hyb_2$ these are the default value $0^\ell$. The security in this case will follow from the random oracle assumption of $H$. We proceed in two steps. First, assume that the distinguisher of  $\Hyb_1$ and $\Hyb_2$ does not make any query to $H$. We show that the pads used to mask the unknown inputs of $\sen$ will be uniformly random and independent of each other due to random oracle assumption. Recall that the pads for masking $\big\{ \vx_{i,j} \big\}_{i \in [m] \wedge j \neq r_i}$ are $\big\{ H\big(i, \vb_i \xor (\vs \band ( \vc_\subscript{r_i} \xor \vc_j) )\big)\big\}_{i \in [m] \wedge j \neq r_i}$. Since ${\C}_{\WH}^\csec$ is a WH code, the Hamming weight of each vector in the set $\big\{( \vc_\subscript{r_i} \xor \vc_j)\big\}_{i \in [m] \wedge j \neq r_i}$ is at least $\csec/2$. Since $\vs$ is picked at random from $\bool^\csec$, each of the values in  $\big\{\vs \band ( \vc_\subscript{r_i} \xor \vc_j)\big\}_{i \in [m] \wedge j \neq r_i}$  is uniformly distributed over a domain of size at least $2^{\csec/2}$. Now random oracle assumption lets us conclude that the pads $\big\{ H\big(i, \vb_i \xor (\vs \band ( \vc_\subscript{r_i} \xor \vc_j) )\big)\big\}_{i \in [m] \wedge j \neq r_i}$ are random and independent of each other and thus provide information-theoretic blinding guarantee to the values $\big\{ \vx_{i,j} \big\}_{i \in [m] \wedge j \neq r_i}$.  
	
	Next,   following the standard of proofs in the random oracle model we allow the distinguisher to make polynomial (in $\csec$) number of adaptive queries to $H$. Clearly, if a distinguisher makes a query to $H$ on any of the values  $\big\{\big(i, \vb_i \xor (\vs \band ( \vc_\subscript{r_i} \xor \vc_j)\big)\big)\big\}_{i \in [m] \wedge j \neq r_i}$ that are used to mask the unknown inputs of $\sen$, then it can distinguish between the hybrids. Such queries are denoted as {\em offending queries}. As long as no offending query is made, each of these $m(n-1)$ offending queries is (individually) distributed uniformly at random over a domain of size (at least) $2^{\csec/2}$ and so the distinguisher's probability of hitting upon an offending query  remains the same as in the case he does not make any query at all to $H$.  So if the distinguisher makes $q$ queries, then it's probability of distinguishing only increases by a polynomial factor over $2^{-\csec/2}$. 
\end{proof}

Our security proof relies on random oracle assumption of $H$. However, as mentioned in \cite{KolesnikovK13}, the random oracle assumption can be replaced with a generalized notion of correlation-robustness \cite{IshaiKNP03} referred as $C$-correlation-robustness \cite{KolesnikovK13} in a straightforward way. 
For completeness, we recall the definition of $C$-correlation-robust hash functions below.
\begin{definition}[\cite{KolesnikovK13}] \label{def:CCRF}
	Let $C = \{\vc_1,\ldots,\vc_n\}$ be a set of $\csec$-bit strings such $n = \poly(\csec)$ and for any $j,k$ with $j \not = k$, the Hamming distance between $\vc_j$ and $\vc_k$ is $\Omega(\csec)$. Then a hash function $H: \bool^* \rightarrow \bool^{\ell(\csec)}$ is $C$-correlation-robust if for any polynomial $m(\csec)$ and any non-uniform \ppt\   distinguisher $\Adv$ provided with input $C$	has $\negl(\csec)$ probability of distinguishing  the following distributions: 
	\begin{mydescription}
		\item[--] $\Big\{ \big\{\big(j,k,H(i,\vb_i  \xor ( (\vc_j \xor \vc_k) \band \vs ))\big) \big\}_{i \in [m], j,k \in [n], j \not = k}\Big\}$ where each string in $\{\vb_i\}_{i \in [m]}$ is a $\csec$-bit  and $\vs$ is a $\csec$-bit string chosen uniformly at random and independent of   $\{\vb_i\}_{i \in [m]}$.
		\item[--] $U_{m(n-1)\ell}$; $U_{m (n-1)\ell}$ denotes uniform distribution over $\bool^{m(n-1)\ell}$.
	\end{mydescription}
\end{definition}	

\subsubsection{Efficiency}

The actively secure protocol incurs a communication of  $\Order(\csec^2)$ bits  in {\bf Seed OT Phase}. In {\bf OT Extension Phase I}, $\rec$  sends  $ \csec(m+\ssec)$ bits to $\sen$. In {\bf Checking Phase}, $\sen$ and $\rec$ invokes $\FCOIN$. We follow the implementation of~ \cite{KellerOS15} for $\FCOIN$ that generates $\ssec(m + \ssec)$ bits at one go and uses a pseudorandom function (PRF) and a PRG. Let $F_k: \bitset^{2\csec} \rightarrow \bitset^\csec$ be a keyed PRF with $k  \in \bitset^\csec$  be a uniform random string and $G: \bitset^\csec \rightarrow \bitset^{\ssec(m + \ssec)}$ be a PRG. Then $\FCOIN$ can be realized as follows: 

\begin{mydescription}
	\item $\rec$ generates and sends random $s_\rec \gets \bitset^\csec$  to $\sen$.
	\item $\sen$ generates and sends random $s_\sen \gets \bitset^\csec$ to $\rec$.
	\item Both parties compute $s = F_k(s_\sen,s_\rec)$ and output $\vw_1 || \vw_2\ldots || \vw_\ssec = G(s)$ where each $\vw_i \in \bitset^{(m+\ssec)}$. 
\end{mydescription} 

With the above implementation of $\FCOIN$, {\bf Checking Phase} incurs a communication of $\Order(\ssec\log{\csec})$. In {\bf OT Extension Phase II},  $\sen$ communicates $mn\ell $ bits to $\rec$.  So the total communication our protocol  is $\Order (\csec^2 + \csec (m+\ssec)  +  \ssec \log{\csec} + mn\ell)  = \Order(m(\csec + n \ell))$ (assuming $m$ is asymptotically bigger than $\csec$ and $\ssec$) bits which is same as that of KK13 OT extension. 

Computation-wise, {\bf Checking Phase} constitutes the additional work that our protocol does over KK13 protocol. The additional work involves   cheap xor and bit-wise multiplications.

\section{Empirical Results}
\label{sec:imple}
We compare our work with the existing protocols of  KK13 \cite{KolesnikovK13}, IKNP~\cite{IshaiKNP03}, ALSZ15~\cite{AsharovL0Z15} and NNOB~\cite{NielsenNOB12}  in terms of communication and runtime in LAN and WAN settings. The implementation of KOS~\cite{KellerOS15} is not available in the platform that we consider for benchmarking. As  per the claim made in KOS, the runtime of their OT extension bears an overhead of $5\%$ with respect to IKNP protocol both in LAN and WAN. The communication complexity of KOS is at least the complexity of IKNP. These results allow to get a clear idea on how KOS fares compared to our protocol. 

In any practical scenario, the computation is not the prime bottleneck, as computing power has improved a lot due to improvements in hardwares. The communication overhead is the main issue, and so most of the aforementioned protocols are aimed at improving the communication complexity. Our empirical results show that our proposed protocol performs way better than even the passively secure IKNP in terms of communication complexity when $\ot{}{}{n}$s with short input are expected outcomes. Since ALSZ15, NNOB and KOS are built upon IKNP, they lag behind our protocol in performance too. Though our prime focus is to improve the communication complexity, our protocol outperforms IKNP and the existing actively secure protocols in runtime both in LAN and WAN setting. We now detail the software, hardware and implementation specifications used in our empirical analysis before presenting our experimental findings. 

\noindent{\em Software Details.}
We build upon the OT extension code provided by the Encrypto group on github~\cite{OTCODE}. It contains the OT extension implementations of KK13, NNOB and ALSZ15 in C++, using the Miracl library for elliptic curve arithmetic. We build upon the KK13 code for our actively secure protocol. AES-128 has been used for the PRG instantiation and the random oracle has been implemented by the SHA-256 hash function.   

\noindent{\em Hardware Details.}
We have tested the code in a standard LAN network and a simulated WAN setting.  Our machine has 8 GB RAM and an Intel Core i5-4690 CPU with 3.5 GHz processor speed.  For WAN simulation, we used the tc tool of Linux, where we introduced a round trip delay of 100 milliseconds into the network, with a limited bandwidth of 20 Mbps.

\noindent{\em Implementation Details.}
\label{sec:impdet}
We discuss our choice of $m$, $n$ and $\ell$ denoting the number of extended output OTs, the number of inputs of $\sen$ in each extended OT and the bit length of $\sen$'s input respectively. In other words, we refer to the parameters $m,n$ and $\ell$ of  $\ot{\ell}{m}{n}$.  Recall that as long as the input length of $\sen$, namely $\ell$ satisfies the relation $\ell = \Omega(\log{n})$, theoretically KK13 OT extension (and our proposed OT extension) gives better communication complexity for producing  $\ot{\ell}{m}{n}$ than IKNP protocol and its variants (cf. Section~\ref{kk13eff}).  For benchmarking, we take two approaches.   First, we fix $n =16$ and $\ell = 4 (= \log{16})$ and experiment on the following values of $m$:   $1.25\times10^5$, $2.5\times10^5$, $5\times10^5$ and  $1.25\times10^6$. Next, we fix $m$ to a value and vary $n$ from $8$ to $256$ in the powers of $2$.  The value of $\ell$ for each choice of $n$ is set as $\log{n}$. Our protocol and  KK13 directly generate OTs of type $\ot{\ell}{}{n}$ whereas  IKNP, ALSZ15 and NNOB generate OTs of type $\ot{1}{}{2}$. To compare with IKNP, ALSZ15 and NNOB,  we convert the output OTs of these protocols, namely $\ot{1}{}{2}$ to   $\ot{\ell}{}{n}$ using the efficient  transformation of  \cite{NaorP05} (cf. Section~\ref{kk13eff}). 

To obtain a computational security guarantee of $2^{-128}$, while KK13 and our protocol need $256$ seed OTs, IKNP, NNOB and ALSZ15 need  $128$, $342$ and respectively $170$ seed OTs.  Among these, except IKNP and KK13, the rest are maliciously secure. To obtain a statistical security guarantee of $2^{-40}$ against a malicious receiver, ALSZ15 and NNOB need $380$ checks whereas we need $96$ checks.

We follow the approach of ALSZ15 implementation and perform the OT extension in batches of $2^{16}$ in sequential order. For each batch, the sender and the receiver perform the seed OTs, participate in a coin tossing protocol, perform the checks and finally obtain the output.   We use one thread in the sender as well as in the receiver side  in order to calculate the upper bound on the computation cost. However our code is compatible with multiple threads where each thread can carry out a batch of OTs. 
Lastly, our seed OT implementation relies on the protocol of~\cite{PeikertVW08}.

\subsection{Performance Comparison}
Since we build upon KK13 protocol, we first display  in Table~\ref{tab:overkk1}-\ref{tab:overkk2} the overhead (in \%)  of our protocol compared to KK13. Notably, the communication overhead lies in the range $\mathbf{0.011}\% \mbox{-} \mathbf{0.028}\%$. Table~\ref{tab:overkk1} shows that for large enough number of extended OTs, the runtime overhead of our protocol over KK13 ranges between $4\mbox{-}6\%$ for both LAN and WAN.  Table \ref{tab:overkk2} demonstrates that  the runtime overheads drop below $2\%$ when  in addition the number of inputs of the sender in the extended OTs is large enough.

Next, our empirical results are shown in Table \ref{tab:compareall}-\ref{tab:compareallvaryn}  and  Fig. \ref{fig:compareall}-\ref{fig:compareallvaryn}.  First, we discuss our results in Table \ref{tab:compareall} and Fig. \ref{fig:compareall} where we vary $m$. Next, we focus on the results displayed in Table \ref{tab:compareallvaryn} and Fig. \ref{fig:compareallvaryn} where we vary $n$. In both the case studies, our protocol turns out to be the best choice among the actively secure OT extensions and second best overall closely trailing KK13 which is the overall winner.  Communication complexity wise, our protocol is as good as KK13  and is way better than the rest. The empirical results are in concurrence with  the theoretical $\log{n}$ improvement of KK13 and our protocol over IKNP (and its variants).




\begin{table}[h!]
	\centering
	\begin{minipage}{0.48\textwidth}
		\caption{\footnotesize{Runtime and Communication Overhead (in \%) of Our protocol over KK13 for producing $\ot{4}{m}{16}$.}}
		\centering
		\resizebox{.6\textwidth}{!}{
			\begin{tabular}{ l |c| c| c } 
				\toprule
				\multirow{4}{*}{$m$}  
				& \multicolumn{2}{c|}{Runtime} & \multirow{2}{*}{Communication}\rule{0pt}{3ex}  \\ 		
				\hhline{~--~}
				& LAN  & WAN  & \rule{0pt}{3ex}  \\ 		 			
				\midrule
				$1.25\times10^5$ 	& 6.48	& 9.27	& 0.012\\ 
				$2.5\times10^5$ 	& 6.33	& 8.76	& 0.012\\
				$5\times10^5$ 		& 5.88	& 7.09	& 0.012\\
				$1.25\times10^6$ 	& 3.78	& 5.72	& 0.028\\ [1ex]
				\bottomrule
			\end{tabular}
		}	
		\label{tab:overkk1}
	\end{minipage}
	\hfill
	\begin{minipage}{0.48\textwidth}
		\caption{\footnotesize{Runtime and Communication Overhead (in \%) of Our protocol over KK13 for producing  $\ot{\log{n}}{10^6}{n}$.}}
		\centering
		\resizebox{.6\textwidth}{!}{
			\begin{tabular}{ l |c| c| c } 
				\toprule
				\multirow{4}{*}{$n$ } 
				& \multicolumn{2}{c|}{Runtime} & \multirow{2}{*}{Communication}\rule{0pt}{3ex}  \\ 		
				\hhline{~--~}
				& LAN  & WAN  & \rule{0pt}{3ex}  \\ 		 			
				\midrule
				$8$ 	& 6.16	& 11.77	& 0.011\\ 
				$16$ 	& 4.13	& 6.6	& 0.012\\
				$32$ 	& 4.5	& 2.29	& 0.013\\
				$64$ 	& 3.65	& 1.81	& 0.014\\
				$128$ 	& 1.24	& 1.18	& 0.015\\
				$256$ 	& 0.58  & 0.8	& 0.015\\ [1ex] 
				\bottomrule
			\end{tabular}
		}
		\label{tab:overkk2}
	\end{minipage}
\end{table}

\noindent{\em Performance Comparison for varied $m$ values.} The results in Table \ref{tab:compareall} reflects that  KK13 is the best performer  in terms of communication as well as runtime in LAN and WAN. Our actively secure protocol is the second best closely trailing KK13. Our protocol has communication overhead of  $0.012\mbox{-}0.028 \%$ over KK13, while IKNP, ALSZ15 and NNOB have overheads of  $79.7\mbox{-}84\%$, $249\%$ and $352\%$ respectively.  Noticeably,  we observe that the cost for generating $5\times10^5$ $\ot{}{}{16}$ using our protocol is less than the cost of generating $1.25\times10^5$ $\ot{}{}{16}$ using NNOB. Similarly  the cost of generating $1.25\times10^6$ $\ot{}{}{16}$ using our protocol is $71.6\%$ of the cost of generating $5\times10^5$ $\ot{}{}{16}$ using ALSZ15. 

In LAN setting, the overheads over KK13 vary in the range of  $3.78\mbox{-}6.48 \%$,  $11\mbox{-}17.60\%$, $14.4\mbox{-}22.7\%$ and $14.5\mbox{-}20.8\%$ respectively for our protocol, IKNP, ALSZ15 and NNOB.  The similar figures in WAN setting are $5.72\mbox{-}9.26 \%$, $16\mbox{-}22.6\%$,  $35.3\mbox{-}39\%$ and $24.1\mbox{-}29.1\%$ respectively for our protocol,  IKNP, ALSZ15 and NNOB. A pictorial representation  is shown in  Fig. \ref{fig:compareall}.

\begin{table*}[h!]	
	\caption{\footnotesize Performance Comparison of various OT extensions for producing $\ot{4}{m}{16}$.}
	\centering
	\fontsize{20}{20}\selectfont
	\resizebox{1\textwidth}{!}{
		\begin{tabular}{| c | c | c | c | c | c | c | c | c | c | c | c | c | c | c | c |} 
			\toprule
			\multirow{2}{*}{$m$} & \multicolumn{5}{c |}{Runtime in LAN (in sec)} & \multicolumn{5}{c}{Runtime in WAN (in sec)} & \multicolumn{5}{| c |}{Communication (in MB)} \\
			\hhline{~---------------}
			& {\bf KK13} & {\bf This paper} & {\bf IKNP} &{\bf ALSZ15} & {\bf NNOB} & {\bf KK13} & {\bf This paper} & {\bf IKNP} &{\bf ALSZ15} & {\bf NNOB} & {\bf KK13} & {\bf This paper} & {\bf IKNP} &{\bf ALSZ15} & {\bf NNOB} \rule{0pt}{3ex}  \\
			\midrule
			&&&&&&&&&&&&&&& \\
			$1.25\times10^5$	& 02.16	& 02.30 & 02.54 & 02.65 & 02.61 & 13.38	& 14.62 & 16.40 & 18.10 & 16.90	& 04.77	& 04.77 & 08.66 & 16.67 & 21.60 \\ 
			$2.5\times10^5$     & 04.23 & 04.50 & 04.88 & 05.26 & 05.05 & 24.32 & 26.45 & 29.26 & 33.80 & 31.40 & 09.54 & 09.54 & 17.15 & 33.33 & 43.21 \\
			$5\times10^5$ 	    & 08.50 & 09.00 & 09.78 & 10.04 & 10.10 & 47.39 & 50.75 & 56.9 & 65.00 & 60.40 & 19.08 & 19.08 & 34.79 & 66.62 & 86.39 \\
			$1.25\times10^6$ 	& 21.68 & 22.50 & 24.07 & 24.81 & 24.84 & 115.34 & 121.94 & 133.81 & 158.60 & 143.20 & 47.69 & 47.70 & 87.74 &	166.54 & 215.95 \\ 
			\bottomrule
		\end{tabular}
	}
	\label{tab:compareall}
\end{table*}

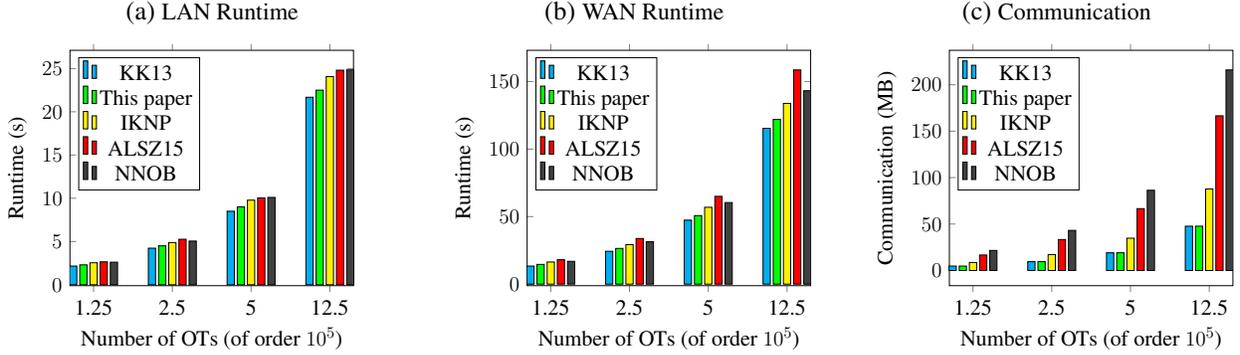
\begin{figure*}[htb!]
	\caption{\footnotesize Performance Comparison of various OT extensions for producing $\ot{4}{m}{16}$.}
	\centering
	\begin{subfigure}{.34\textwidth}
		\caption{\footnotesize LAN Runtime}
		\begin{tikzpicture} [scale = 0.55]
		\fontsize{15 pt}{\baselineskip}
		\begin{axis}[
		symbolic x coords={1.25, 2.5, 5, 12.5},
		xtick=data,
		ybar=2pt,
		samples=2,
		domain=1:2,
		ylabel=Runtime (s),
		xlabel=Number of OTs (of order $10^5$),
		ylabel near ticks,
		x label style={ below=3mm},
		xtick=data,
		bar width=5,
		legend pos=north west
		]

		\addplot[ybar,fill=cyan] coordinates {
			(1.25,  2.16)
			(2.5,  4.23)
			(5,  08.50)
			(12.5,  21.68)
		};
		
		\addplot[ybar,fill=green] coordinates {
			(1.25,  2.30)
			(2.5,  4.5)
			(5,  9)
			(12.5,  22.50)
		};

		\addplot[ybar,fill=yellow] coordinates {
			(1.25,  2.54)
			(2.5,  4.88)
			(5,  9.78)
			(12.5,  24.07)
		};
		
		\addplot[ybar,fill=red] coordinates {
			(1.25,  2.65)
			(2.5,  5.26)
			(5,  10.04)
			(12.5,  24.81)
		};
		
		\addplot[ybar,fill=darkgray] coordinates {
			(1.25,  2.61)
			(2.5,  5.05)
			(5,  10.10)
			(12.5,  24.84)
		};
		\legend{KK13, This paper, IKNP, ALSZ15, NNOB}
		\end{axis}
		\end{tikzpicture}
		\label{fig:runlanall}
	\end{subfigure}%
	\begin{subfigure}{.34\textwidth}
		\caption{\footnotesize WAN Runtime}
		\centering
		\begin{tikzpicture}[scale = 0.55]
		\fontsize{15 pt}{\baselineskip}
		\begin{axis}[
		symbolic x coords={1.25, 2.5, 5, 12.5},
		xtick=data,
		ybar=2pt,
		samples=2,
		domain=1:2,
		ylabel=Runtime (s),
		xlabel=Number of OTs (of order $10^5$),
		ylabel near ticks,
		x label style={ below=3mm},
		xtick=data,
		bar width=5,
		legend pos=north west
		]

		\addplot[ybar,fill=cyan] coordinates {
			(1.25,  13.38)
			(2.5,  24.32)
			(5,  47.39)
			(12.5,  115.34)
		};
		
		\addplot[ybar,fill=green] coordinates {
			(1.25,  14.62)
			(2.5,  26.45)
			(5,  50.75)
			(12.5,  121.94)
		};
		
		\addplot[ybar,fill=yellow] coordinates {
			(1.25,  16.40)
			(2.5,  29.26)
			(5,  56.9)
			(12.5,  133.81)
		};
		
		\addplot[ybar,fill=red] coordinates {
			(1.25,  18.10)
			(2.5,  33.80)
			(5,  65)
			(12.5,  158.6)
		};
		
		\addplot[ybar, fill=darkgray] coordinates {
			(1.25,  16.90)
			(2.5,  31.40)
			(5,  60.4)
			(12.5,  143.2)
		};
		\legend{KK13, This paper, IKNP, ALSZ15, NNOB}
		\end{axis}
		\end{tikzpicture}
		\label{fig:runwanall}
	\end{subfigure}%
	\begin{subfigure}{.34\textwidth}
		\caption{\footnotesize Communication}
		\centering
		\begin{tikzpicture}[scale = 0.55]
		\fontsize{15 pt}{\baselineskip}
		\begin{axis}[
		symbolic x coords={1.25, 2.5, 5, 12.5},
		xtick=data,
		ybar=2pt,
		samples=2,
		domain=1:2,
		ylabel=Communication (MB),
		xlabel=Number of OTs (of order $10^5$),
		ylabel near ticks,
		x label style={ below=3mm},
		xtick=data,
		bar width=5,
		legend pos=north west
		]

		\addplot[ybar,fill=cyan] coordinates {
			(1.25,  4.77)
			(2.5,  9.54)
			(5,  19.08)
			(12.5,  47.69)
		};
		
		\addplot[ybar,fill=green] coordinates {
			(1.25,  4.77)
			(2.5,  9.54)
			(5,  19.08)
			(12.5,  47.70)
		};
		
		\addplot[ybar,fill=yellow] coordinates {
			(1.25,  8.66)
			(2.5,  17.15)
			(5,  34.79)
			(12.5,  87.74)
		};
		
		\addplot[ybar,fill=red] coordinates {
			(1.25,  16.67)
			(2.5,  33.33)
			(5,  66.62)
			(12.5,  166.54)
		};

		\addplot[ybar,fill=darkgray] coordinates {
			(1.25,  21.60)
			(2.5,  43.21)
			(5,  86.39)
			(12.5,  215.95)
		};
		\legend{KK13, This paper, IKNP, ALSZ15, NNOB}[scale = 0.2]
		\end{axis}			
		\end{tikzpicture}			
		\label{fig:commall}
	\end{subfigure}
	\label{fig:compareall}
\end{figure*}

\noindent{\em Performance Comparison for varied $n$ values.} 
Here we set $m = 5\times 10^4$ and vary $n$ from $8$ to $256$ in the powers of $2$. Similar to the previous case study,    KK13 turns out the best performer  here as well (cf. Table \ref{tab:compareallvaryn}).  Our protocol is the second best closely trailing KK13. Our protocol has communication overhead of  $0.011\mbox{-}0.028 \%$ over KK13, while IKNP, ALSZ15 and NNOB have overheads of  $74.5\mbox{-}384.6\%$, $239.5\mbox{-}549.4\%$ and $341.8\mbox{-}652.9\%$ respectively. In LAN setting, the overheads over KK13 vary in the range of  $3.5\mbox{-}8.8 \%$,  $12.2\mbox{-}263.8\%$, $22.2\mbox{-}282.7\%$ and $20.5\mbox{-}267.5\%$ respectively for our protocol, IKNP, ALSZ15 and NNOB. The similar figures in WAN setting are $13.7\mbox{-}20.2 \%$, $27.1\mbox{-}77.4 \%$, $36.7\mbox{-}106.3\% $ and $30.9\mbox{-}86.6\%$ respectively for our protocol,  IKNP, ALSZ15 and NNOB. A pictorial representation  is shown in  Fig. \ref{fig:compareallvaryn}.




\begin{table*}[h!]	
	\caption{\footnotesize Performance Comparison of various OT extensions for producing  $\ot{\log{n}}{5 \cross 10^4}{n}$.}
	\centering
	\fontsize{20}{20}\selectfont
	\resizebox{1\textwidth}{!}{
		\begin{tabular}{| c | c | c | c | c | c | c | c | c | c | c | c | c | c | c | c |} 
			\toprule
			\multirow{2}{*}{$n$} & \multicolumn{5}{c |}{Runtime in LAN (in sec)} & \multicolumn{5}{c}{Runtime in WAN (in sec)} & \multicolumn{5}{| c |}{Communication (in MB)} \\
			\hhline{~---------------}
			& {\bf KK13} & {\bf This paper} & {\bf IKNP} &{\bf ALSZ15} & {\bf NNOB} & {\bf KK13} & {\bf This paper} & {\bf IKNP} &{\bf ALSZ15} & {\bf NNOB} & {\bf KK13} & {\bf This paper} & {\bf IKNP} &{\bf ALSZ15} & {\bf NNOB} \rule{0pt}{3ex}  \\
			\midrule
			&&&&&&&&&&&&&&& \\
			$8$	 & 0.70	 & 0.73	 & 0.79	 & 0.86	 & 0.85	 & 5.06	 & 6.08	 & 7.38	 & 7.67	 & 7.41	 & 1.43	 & 1.43	 & 2.5	 & 4.87	 & 6.34 \\ 
			$16$ & 0.96	 & 1.01	 & 1.15	 & 1.23	 & 1.17	 & 6.70	 & 7.72	 & 8.52	 & 9.15	 & 8.77	 & 1.91	 & 1.91	 & 3.53	 & 6.69	 & 8.65 \\
			$32$  & 1.22	 & 1.28	 & 1.48	 & 1.64	 & 1.58	 & 7.46  & 8.53  & 9.10	 & 10.2	 & 9.86	 & 2.39	 & 2.39	 & 4.89	 & 8.81	 & 11.29\\
			$64$ & 1.33	 & 1.45	 & 2.26	 & 2.37	 & 2.36	 & 8.88	 & 10.31	 & 11.29	 & 12.97	 & 12.64	 & 2.86	 & 2.86	 & 7.01	 & 11.7	 & 14.68 \\ 
			$128$ & 1.50	 & 1.63	 & 3.61	 & 3.94	 & 3.71	 & 9.56	 & 11.05	 & 14.84	 & 16.63	 & 15.16	 & 3.34	 & 3.34	 & 10.85	 & 16.36	 & 19.8 \\ 
			$256$ & 1.75	 & 1.89	 & 6.37	 & 6.70	 & 6.43	 & 10.96	 & 12.46	 & 19.43	 & 22.6	 & 20.45	 & 3.82	 & 3.82	 & 18.5	 & 24.8	 & 28.75 \\ [1ex] 
			\bottomrule
		\end{tabular}
	}
	\label{tab:compareallvaryn}
\end{table*}

\begin{figure*}[htb!]
	\caption{\footnotesize Performance Comparison of various OT extensions for producing  $\ot{\log{n}}{5 \cross 10^4}{n}$.}
	\centering
	\begin{subfigure}{.34\textwidth}
		\caption{\footnotesize LAN Runtime}
		\begin{tikzpicture} [scale = 0.55]
		\fontsize{12 pt}{\baselineskip}
		\begin{axis}[
		symbolic x coords={8, 16, 32, 64, 128, 256},
		xtick=data,
		ybar=.5pt,
		samples=2,
		domain=1:2,
		ylabel=Runtime (s),
		xlabel=Value of $n$,
		ylabel near ticks,
		x label style={ below=3mm},
		xtick=data,
		bar width=5,
		legend pos=north west
		]

		\addplot[ybar,fill=cyan] coordinates {
			(8, 0.704)
			(16, 0.956)
			(32, 1.218)
			(64, 1.332)
			(128, 1.500)
			(256, 1.750)
			
		};
		
		\addplot[ybar,fill=green] coordinates {
			(8, 0.729)
			(16, 1.008)
			(32, 1.285)
			(64, 1.450)
			(128, 1.630)
			(256, 1.891)
			
		};

		\addplot[ybar,fill=yellow] coordinates {
			(8, 0.790)
			(16, 1.149)
			(32, 1.478)
			(64, 2.263)
			(128, 3.608)
			(256, 6.368)
			
		};
		
		\addplot[ybar,fill=red] coordinates {
			(8, 0.860)
			(16, 1.226)
			(32, 1.640)
			(64, 2.369)
			(128, 3.938)
			(256, 6.698)
			
		};
		
		\addplot[ybar,fill=darkgray] coordinates {
			(8, 0.848)
			(16, 1.173)
			(32, 1.578)
			(64, 2.361)
			(128, 3.713)
			(256, 6.432)
			
		};
		\legend{KK13, This paper, IKNP, ALSZ15, NNOB}
		\end{axis}
		\end{tikzpicture}
		\label{fig:runlanallvaryn}
	\end{subfigure}%
	\begin{subfigure}{.34\textwidth}
		\caption{\footnotesize WAN Runtime}
		\centering
		\begin{tikzpicture}[scale = 0.55]
		\fontsize{12 pt}{\baselineskip}
		\begin{axis}[
		symbolic x coords={8, 16, 32, 64, 128, 256},
		xtick=data,
		ybar=.5pt,
		samples=2,
		domain=1:2,
		ylabel=Runtime (s),
		xlabel=Value of $n$,
		ylabel near ticks,
		x label style={ below=3mm},
		xtick=data,
		bar width=5,
		legend pos=north west
		]

		\addplot[ybar,fill=cyan] coordinates {
			(8, 5.060)
			(16, 6.698)
			(32, 7.460)
			(64, 8.877)
			(128, 9.557)
			(256, 10.956)
			
		};
		
		\addplot[ybar,fill=green] coordinates {
			(8, 6.083)
			(16, 7.719)
			(32, 8.527)
			(64, 10.311)
			(128, 11.049)
			(256, 12.459)
			
		};

		\addplot[ybar,fill=yellow] coordinates {
			(8, 7.376)
			(16, 8.516)
			(32, 9.100)
			(64, 11.295)
			(128, 14.841)
			(256, 19.435)
			
		};
		
		\addplot[ybar,fill=red] coordinates {
			(8, 7.673)
			(16, 9.154)
			(32, 10.229)
			(64, 12.975)
			(128, 16.635)
			(256, 22.603)
			
		};
		
		\addplot[ybar,fill=darkgray] coordinates {
			(8, 7.408)
			(16, 8.768)
			(32, 9.863)
			(64, 12.642)
			(128, 15.156)
			(256, 20.447)
			
		};
		\legend{KK13, This paper, IKNP, ALSZ15, NNOB}
		\end{axis}
		\end{tikzpicture}
		\label{fig:runwanallvaryn}
	\end{subfigure}%
	\begin{subfigure}{.34\textwidth}
		\caption{\footnotesize Communication}
		\centering
		\begin{tikzpicture}[scale = 0.55]
		\fontsize{12 pt}{\baselineskip}
		\begin{axis}[
		symbolic x coords={8, 16, 32, 64, 128, 256},
		xtick=data,
		ybar=.5pt,
		samples=2,
		domain=1:2,
		ylabel=Communication (MB),
		xlabel=Value of $n$,
		ylabel near ticks,
		x label style={ below=3mm},
		xtick=data,
		bar width=5,
		legend pos=north west
		]

		\addplot[ybar,fill=cyan] coordinates {
			(8, 1.435)
			(16, 1.913)
			(32, 2.391)
			(64, 2.862)
			(128, 3.34)
			(256, 3.818)
			
		};
		
		\addplot[ybar,fill=green] coordinates {
			(8, 1.435)
			(16, 1.913)
			(32, 2.392)
			(64, 2.863)
			(128, 3.341)
			(256, 3.819)
			
		};

		\addplot[ybar,fill=yellow] coordinates {
			(8, 2.504)
			(16, 3.531)
			(32, 4.891)
			(64, 7.012)
			(128, 10.852)
			(256, 18.503)
			
		};
		
		\addplot[ybar,fill=red] coordinates {
			(8, 4.872)
			(16, 6.692)
			(32, 8.811)
			(64, 11.72)
			(128, 16.356)
			(256, 24.797)
			
		};
		
		\addplot[ybar,fill=darkgray] coordinates {
			(8, 6.339)
			(16, 8.648)
			(32, 11.291)
			(64, 14.678)
			(128, 19.802)
			(256, 28.751)
			
		};
		\legend{KK13, This paper, IKNP, ALSZ15, NNOB}[scale = 0.2]
		\end{axis}			
		\end{tikzpicture}			
		\label{fig:commallvaryn}
	\end{subfigure}
	\label{fig:compareallvaryn}
\end{figure*}
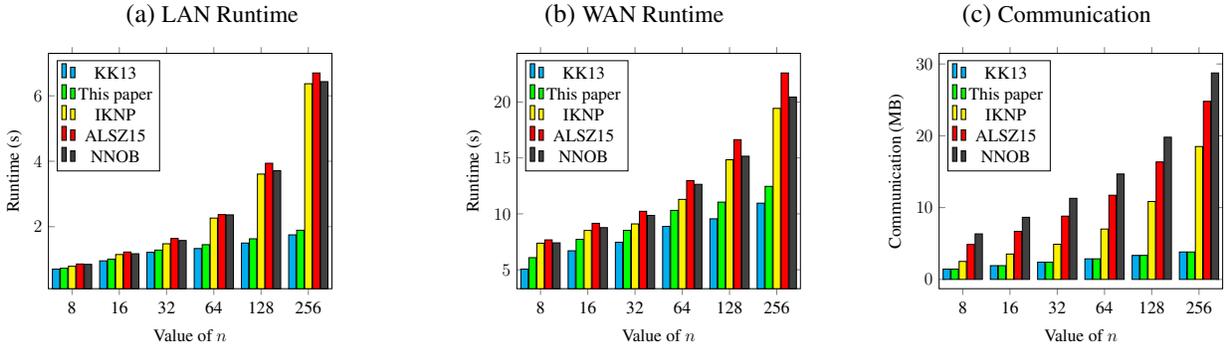

\section{Application to Private Set Intersection}\label{sec:psi}
In a private set intersection~(PSI) protocol, a sender $\sen$ and a receiver $\rec$ hold sets $X = \{x_1, x_2, \ldots x_{n_1}\}$ and $Y= \{y_1, y_2,  \ldots y_{n_2} \}$ respectively.  The goal of the protocol is to let the receiver know the intersection $X \cap Y$ and nothing more. Put simply, a PSI protocol  realizes the functionality $\FPSI(X,Y) = ( \bot, X \cap Y )$. The set sizes are assumed to be public. 

We set our focus on the PSI protocols that are OT-based so that we can employ our OT extension protocol in them to improve efficiency. \cite{PinkasSZ14} introduced  an OT-based PSI protocol  relying on black-box usage of random $\ot{}{}{n}$. Subsequently,  \cite{PinkasSZ15} improved the communication overhead of  \cite{PinkasSZ14}. Both the protocols are semi-honestly secure.  At the core of the protocols lies an important building block called  set inclusion that allows $\rec$ to check whether its input, say $y$, is contained in  $X$, owned by $\sen$, while preserving the input privacy of both the parties.   In the set inclusion protocol, the receiver breaks its $\sigma$-bit element, say $y$  into $t$ blocks of $\eta$-bits. Similarly $\sen$ breaks each of its  $\sigma$-bit element $x_i$ into $t$ blocks of $\eta$-bits. Next, a random  $\ot{}{}{2^\eta}$ is used for $k$th block of receiver's input for $k = 1,\ldots,t$ where the random OT does the following. Denoting $N= 2^\eta$, a random OT of above type generates $N$ random masks and delivers them to $\sen$. $\rec$ receives  from the OT the mask corresponding to its block which acts as its choice string.  $\sen$ then generates a mask for each of its elements in $X$ using the masks received from the $t$ random $\ot{}{}{N}$s. Similarly, $\rec$ combines the masks it receives from the OTs to generate the mask corresponding to its input element $y$.  The verification whether $y$ is included in $X$ is then done by performing checks over the masks. Namely $\sen$ sends across the masks corresponding to all its elements in $X$. $\rec$ verifies if the mask corresponding to $y$ matches with one of them or not.   In a naive approach, PSI can be achieved by having  the receiver run the set inclusion protocol $n_2$ times, once for each element in $Y$. \cite{PinkasSZ14} and subsequently  \cite{PinkasSZ15}  improved  the complexity of the naive approach by reducing the number of OTs  and improving the input length of $\sen$ in the OTs.  Various hashing techniques such as Simple Hashing \cite{PinkasSZ14}  and Cuckoo Hashing  (with a stash $s$ \cite{KirschMW09}),  $h$-ary Cuckoo Hashing   \cite{FotakisPSS03} and Feistel-like functions~\cite{ArbitmanNS10} were used to achieve the goal. 
However, as mentioned before, both \cite{PinkasSZ14} and  \cite{PinkasSZ15} works in semi-honest setting. Indeed, Lamb{\ae}k in his detailed  analysis \cite{cryptoeprint:2016:665} finds   three vulnerabilities  in  \cite{PinkasSZ14,PinkasSZ15} when malicious adversaries are considered. Details follow.


One of vulnerabilities corresponds to sender corruption. Fixing the problem  remains  an open question. The remaining two vulnerabilities correspond to the receiver corruption. In more details, the first problem comes from a malicious receiver who can learn whether some elements of its choice outside his set $Y$ of size $n_2$ belong to $\sen$'s input $X$ or not. The solution proposed in the thesis to thwart this attack uses Shamir secret sharing (SS)  paired with symmetric-key encryption (SKE). Recall that, $\sen$ sends the masks corresponding to its elements in $X$ after the OT executions to help $\rec$ identify the elements in the intersection. The idea of the proposed solution of  \cite{cryptoeprint:2016:665} is  to lock the masks using a key of SKE, secret share the key  and  allow $\rec$ to recover the key only when $\rec$ uses less than or equal to $n_2$ elements in the set inclusion protocols (i.e. in the OT executions).  The second vulnerability may result from any malicious behaviour of $\rec$  in the OT executions of set inclusion protocol.  \cite{cryptoeprint:2016:665} proposes to fix the problem by using maliciously secure (against corrupt receiver) OT protocols. Using off-the-shelves maliciously secure OT extension protocols, \cite{cryptoeprint:2016:665} therefore obtains a PSI protocol that is maliciously secure against corrupt $\rec$ but semi-honestly secure against corrupt  $\sen$.  For complete details of the protocol of Lamb{\ae}k, refer \cite{cryptoeprint:2016:665}. 

We propose to use our maliciously secure OT extension protocol in the PSI protocol of  \cite{cryptoeprint:2016:665} to obtain the most efficient PSI protocol  that is maliciously secure against corrupt $\rec$ but semi-honestly secure against corrupt  $\sen$. As evident from the theoretical and experimental results presented in this work, our maliciously secure OT extension protocol is a better choice compared to the existing maliciously secure extension protocols ~\cite{AsharovL0Z15,NielsenNOB12,KellerOS15}   when the OTs  required are of type $\ot{}{}{n}$. As PSI employs $\ot{}{}{n}$ (instead of  $\ot{}{}{2}$),  our extension protocol fits the bill. Lastly, we find a concrete vulnerability for the malicious corrupt $\rec$  case in  Lamb{\ae}k's PSI protocol when semi-honest KK13 OT extension is used in it. This confirms Lamb{\ae}k's concern of privacy breach of his PSI protocol that may result from privacy breach of the underlying OT protocols and further confirms the necessity of  maliciously secure OT extension in Lamb{\ae}k's PSI protocol. The attack by the corrupt $\rec$ goes as follows: Using the concrete attack discussed in Section~\ref{sec:kk13} for KK13 protocol, a corrupt $\rec$ in the PSI protocol can recover the outputs to $\sen$ in the OT executions. The outputs to $\sen$ are used to compute the masks for the elements of $X$. Therefore by violating the privacy of semi-honest KK13, $\rec$ can completely recover the masks for all the elements of $X$ bypassing the security of  secret sharing technique coupled with SKE.  This allows $\rec$ to learn whether some elements of its choice outside his set $Y$ of size $n_2$ belong to $\sen$'s input $X$ or not. 

\section*{Acknowledgements}
We thank Peter Scholl and Emmanuela Orsini for pointing out a bug in the initial version of the paper. We also thank Michael Zohner and Thomas Schneider for a useful discussion on the implementation part of the work. This work is partially supported by INSPIRE Faculty Fellowship  (DST/INSPIRE/04/2014/015727) from Department of Science \& Technology, India.
{\footnotesize
\bibliographystyle{alpha}
\bibliography{main}

\newcommand{\etalchar}[1]{$^{#1}$}
\begin{thebibliography}{HKK{\etalchar{+}}14}

\bibitem[ALSZ13]{AsharovL0Z13}
Gilad Asharov, Yehuda Lindell, Thomas Schneider, and Michael Zohner.
\newblock More efficient oblivious transfer and extensions for faster secure
  computation.
\newblock In {\em 2013 {ACM} {SIGSAC} Conference on Computer and Communications
  Security, CCS'13, Berlin, Germany, November 4-8, 2013}, pages 535--548, 2013.

\bibitem[ALSZ15]{AsharovL0Z15}
Gilad Asharov, Yehuda Lindell, Thomas Schneider, and Michael Zohner.
\newblock More efficient oblivious transfer extensions with security for
  malicious adversaries.
\newblock In {\em Advances in Cryptology - {EUROCRYPT} 2015 - 34th Annual
  International Conference on the Theory and Applications of Cryptographic
  Techniques, Sofia, Bulgaria, April 26-30, 2015, Proceedings, Part {I}}, pages
  673--701, 2015.

\bibitem[ANS10]{ArbitmanNS10}
Yuriy Arbitman, Moni Naor, and Gil Segev.
\newblock Backyard cuckoo hashing: Constant worst-case operations with a
  succinct representation.
\newblock In {\em 51th Annual {IEEE} Symposium on Foundations of Computer
  Science, {FOCS} 2010, October 23-26, 2010, Las Vegas, Nevada, {USA}}, pages
  787--796, 2010.

\bibitem[BCR86]{BrassardCR86}
Gilles Brassard, Claude Cr{\'{e}}peau, and Jean{-}Marc Robert.
\newblock All-or-nothing disclosure of secrets.
\newblock In {\em Advances in Cryptology - {CRYPTO} '86, Santa Barbara,
  California, USA, 1986, Proceedings}, pages 234--238, 1986.

\bibitem[Bea96]{Beaver96}
Donald Beaver.
\newblock Correlated pseudorandomness and the complexity of private
  computations.
\newblock In {\em Proceedings of the Twenty-Eighth Annual {ACM} Symposium on
  the Theory of Computing, Philadelphia, Pennsylvania, USA, May 22-24, 1996},
  pages 479--488, 1996.

\bibitem[Can01]{Canetti01}
Ran Canetti.
\newblock Universally composable security: {A} new paradigm for cryptographic
  protocols.
\newblock In {\em 42nd Annual Symposium on Foundations of Computer Science,
  {FOCS} 2001, 14-17 October 2001, Las Vegas, Nevada, {USA}}, pages 136--145,
  2001.

\bibitem[DCW13]{DongCW13}
Changyu Dong, Liqun Chen, and Zikai Wen.
\newblock When private set intersection meets big data: an efficient and
  scalable protocol.
\newblock In {\em Proceedings of the 2013 ACM SIGSAC conference on Computer
  \&\#38; communications security}, CCS '13, pages 789--800, 2013.

\bibitem[EGL85]{EvenGL85}
Shimon Even, Oded Goldreich, and Abraham Lempel.
\newblock A randomized protocol for signing contracts.
\newblock {\em Commun. {ACM}}, 28(6):637--647, 1985.

\bibitem[FJN{\etalchar{+}}13]{FrederiksenJNNO13}
Tore~Kasper Frederiksen, Thomas~Pelle Jakobsen, Jesper~Buus Nielsen,
  Peter~Sebastian Nordholt, and Claudio Orlandi.
\newblock Minilego: Efficient secure two-party computation from general
  assumptions.
\newblock In {\em Advances in Cryptology - {EUROCRYPT} 2013, 32nd Annual
  International Conference on the Theory and Applications of Cryptographic
  Techniques, Athens, Greece, May 26-30, 2013. Proceedings}, pages 537--556,
  2013.

\bibitem[FPSS03]{FotakisPSS03}
Dimitris Fotakis, Rasmus Pagh, Peter Sanders, and Paul~G. Spirakis.
\newblock Space efficient hash tables with worst case constant access time.
\newblock In {\em {STACS} 2003, 20th Annual Symposium on Theoretical Aspects of
  Computer Science, Berlin, Germany, February 27 - March 1, 2003, Proceedings},
  pages 271--282, 2003.

\bibitem[GMW87]{GoldreichMW87}
Oded Goldreich, Silvio Micali, and Avi Wigderson.
\newblock How to play any mental game or {A} completeness theorem for protocols
  with honest majority.
\newblock In {\em Proceedings of the 19th Annual {ACM} Symposium on Theory of
  Computing, 1987, New York, New York, {USA}}, pages 218--229, 1987.

\bibitem[GV87]{GoldreichV87}
Oded Goldreich and Ronen Vainish.
\newblock How to solve any protocol problem - an efficiency improvement.
\newblock In {\em Advances in Cryptology - {CRYPTO} '87, {A} Conference on the
  Theory and Applications of Cryptographic Techniques, Santa Barbara,
  California, USA, August 16-20, 1987, Proceedings}, pages 73--86, 1987.

\bibitem[HIKN08]{HarnikIKN08}
Danny Harnik, Yuval Ishai, Eyal Kushilevitz, and Jesper~Buus Nielsen.
\newblock Ot-combiners via secure computation.
\newblock In {\em Theory of Cryptography, Fifth Theory of Cryptography
  Conference, {TCC} 2008, New York, USA, March 19-21, 2008.}, pages 393--411,
  2008.

\bibitem[HKK{\etalchar{+}}14]{HuangKKKM14}
Yan Huang, Jonathan Katz, Vladimir Kolesnikov, Ranjit Kumaresan, and Alex~J.
  Malozemoff.
\newblock Amortizing garbled circuits.
\newblock In {\em Advances in Cryptology - {CRYPTO} 2014 - 34th Annual
  Cryptology Conference, Santa Barbara, CA, USA, August 17-21, 2014,
  Proceedings, Part {II}}, pages 458--475, 2014.

\bibitem[IKNP03]{IshaiKNP03}
Yuval Ishai, Joe Kilian, Kobbi Nissim, and Erez Petrank.
\newblock Extending oblivious transfers efficiently.
\newblock In {\em Advances in Cryptology - {CRYPTO} 2003, 23rd Annual
  International Cryptology Conference, Santa Barbara, California, USA, August
  17-21, 2003, Proceedings}, pages 145--161, 2003.

\bibitem[IR89]{ImpagliazzoR89}
Russell Impagliazzo and Steven Rudich.
\newblock Limits on the provable consequences of one-way permutations.
\newblock In {\em Proceedings of the 21st Annual {ACM} Symposium on Theory of
  Computing, May 14-17, 1989, Seattle, Washigton, {USA}}, pages 44--61, 1989.

\bibitem[Kil88]{Kilian88}
Joe Kilian.
\newblock Founding cryptography on oblivious transfer.
\newblock In {\em Proceedings of the 20th Annual {ACM} Symposium on Theory of
  Computing, May 2-4, 1988, Chicago, Illinois, {USA}}, pages 20--31, 1988.

\bibitem[KK13]{KolesnikovK13}
Vladimir Kolesnikov and Ranjit Kumaresan.
\newblock Improved {OT} extension for transferring short secrets.
\newblock In {\em Advances in Cryptology - {CRYPTO} 2013 - 33rd Annual
  Cryptology Conference, Santa Barbara, CA, USA, August 18-22, 2013.
  Proceedings, Part {II}}, pages 54--70, 2013.

\bibitem[KMW09]{KirschMW09}
Adam Kirsch, Michael Mitzenmacher, and Udi Wieder.
\newblock More robust hashing: Cuckoo hashing with a stash.
\newblock {\em {SIAM} J. Comput.}, 39(4):1543--1561, 2009.

\bibitem[KOS15]{KellerOS15}
Marcel Keller, Emmanuela Orsini, and Peter Scholl.
\newblock Actively secure {OT} extension with optimal overhead.
\newblock In {\em Advances in Cryptology - {CRYPTO} 2015 - 35th Annual
  Cryptology Conference, Santa Barbara, CA, USA, August 16-20, 2015,
  Proceedings, Part {I}}, pages 724--741, 2015.

\bibitem[Lam16]{cryptoeprint:2016:665}
Mikkel Lambæk.
\newblock Breaking and fixing private set intersection protocols.
\newblock Cryptology ePrint Archive, Report 2016/665, 2016.
\newblock \url{http://eprint.iacr.org/2016/665}.

\bibitem[Lar14]{Larraia14}
Enrique Larraia.
\newblock Extending oblivious transfer efficiently - or - how to get active
  security with constant cryptographic overhead.
\newblock In {\em Progress in Cryptology - {LATINCRYPT} 2014 - Third
  International Conference on Cryptology and Information Security in Latin
  America, Florian{\'{o}}polis, Brazil, September 17-19, 2014, Revised Selected
  Papers}, pages 368--386, 2014.

\bibitem[Lin16]{Lindell16}
Yehuda Lindell.
\newblock Fast cut-and-choose-based protocols for malicious and covert
  adversaries.
\newblock {\em J. Cryptology}, 29(2):456--490, 2016.

\bibitem[LP15]{LindellP15}
Yehuda Lindell and Benny Pinkas.
\newblock An efficient protocol for secure two-party computation in the
  presence of malicious adversaries.
\newblock {\em J. Cryptology}, 28(2):312--350, 2015.

\bibitem[LR15]{LindellR15}
Yehuda Lindell and Ben Riva.
\newblock Blazing fast 2pc in the offline/online setting with security for
  malicious adversaries.
\newblock In {\em Proceedings of the 22nd {ACM} {SIGSAC} Conference on Computer
  and Communications Security, Denver, CO, USA, October 12-6, 2015}, pages
  579--590, 2015.

\bibitem[LZ13]{LindellZ13}
Yehuda Lindell and Hila Zarosim.
\newblock On the feasibility of extending oblivious transfer.
\newblock In {\em {TCC}}, pages 519--538, 2013.

\bibitem[Nie07]{Nielsen07}
Jesper~Buus Nielsen.
\newblock Extending oblivious transfers efficiently - how to get robustness
  almost for free.
\newblock {\em {IACR} Cryptology ePrint Archive}, 2007:215, 2007.

\bibitem[NNOB12]{NielsenNOB12}
Jesper~Buus Nielsen, Peter~Sebastian Nordholt, Claudio Orlandi, and
  Sai~Sheshank Burra.
\newblock A new approach to practical active-secure two-party computation.
\newblock In {\em Advances in Cryptology - {CRYPTO} 2012 - 32nd Annual
  Cryptology Conference, Santa Barbara, CA, USA, August 19-23, 2012.
  Proceedings}, pages 681--700, 2012.

\bibitem[NO09]{NielsenO09}
Jesper~Buus Nielsen and Claudio Orlandi.
\newblock {LEGO} for two-party secure computation.
\newblock In {\em Theory of Cryptography, 6th Theory of Cryptography
  Conference, {TCC} 2009, San Francisco, CA, USA, March 15-17, 2009.
  Proceedings}, pages 368--386, 2009.

\bibitem[NP99]{NaorP99}
Moni Naor and Benny Pinkas.
\newblock Oblivious transfer and polynomial evaluation.
\newblock In {\em Proceedings of the Thirty-first Annual ACM Symposium on
  Theory of Computing}, STOC '99, pages 245--254, 1999.

\bibitem[NP05]{NaorP05}
Moni Naor and Benny Pinkas.
\newblock Computationally secure oblivious transfer.
\newblock {\em J. Cryptology}, 18(1):1--35, 2005.

\bibitem[OTC]{OTCODE}
Encrypto group otextension code.
\newblock \url{https://github.com/encryptogroup/OTExtension}.

\bibitem[PSSZ15]{PinkasSZ15}
Benny Pinkas, Thomas Schneider, Gil Segev, and Michael Zohner.
\newblock Phasing: Private set intersection using permutation-based hashing.
\newblock In {\em Proceedings of the 24th USENIX Conference on Security
  Symposium}, SEC'15, 2015.

\bibitem[PSZ14]{PinkasSZ14}
Benny Pinkas, Thomas Schneider, and Michael Zohner.
\newblock Faster private set intersection based on ot extension.
\newblock In {\em Proceedings of the 23rd USENIX Conference on Security
  Symposium}, SEC'14, 2014.

\bibitem[PVW08]{PeikertVW08}
Chris Peikert, Vinod Vaikuntanathan, and Brent Waters.
\newblock A framework for efficient and composable oblivious transfer.
\newblock In {\em Advances in Cryptology - {CRYPTO} 2008, 28th Annual
  International Cryptology Conference, Santa Barbara, CA, USA, August 17-21,
  2008. Proceedings}, pages 554--571, 2008.

\bibitem[Rab81]{Rabin81}
Michael~O. Rabin.
\newblock How to exchange secrets with oblivious transfer, 1981.
\newblock Harvard University Technical Report 81 talr@watson.ibm.com 12955
  received 21 Jun 2005.

\bibitem[SS13]{ShelatS13}
Abhi Shelat and Chih{-}Hao Shen.
\newblock Fast two-party secure computation with minimal assumptions.
\newblock In {\em 2013 {ACM} {SIGSAC} Conference on Computer and Communications
  Security, CCS'13, Berlin, Germany, November 4-8, 2013}, pages 523--534, 2013.

\bibitem[Yao82]{Yao82b}
Andrew~Chi{-}Chih Yao.
\newblock Protocols for secure computations (extended abstract).
\newblock In {\em 23rd Annual Symposium on Foundations of Computer Science,
  Chicago, Illinois, USA, 3-5 November 1982}, pages 160--164, 1982.

\bibitem[Yao86]{Yao86}
Andrew~Chi{-}Chih Yao.
\newblock How to generate and exchange secrets (extended abstract).
\newblock In {\em 27th Annual Symposium on Foundations of Computer Science,
  Toronto, Canada, 27-29 October 1986}, pages 162--167, 1986.

\end{thebibliography}
}
\appendix
\section{The Universal Composability (UC) Security Model}\label{sec:UC}
We prove security of our protocol in the standard Universal Composability (UC) framework of Canetti~\cite{Canetti01}, with static corruption. The UC framework introduces a PPT environment $\Env$ that is invoked on the security parameter $\csec$ and an auxiliary input $z$ 
and oversees the execution of a protocol in one of the two worlds.  The ``ideal" world execution involves dummy parties $P_0$ and $P_1$, an ideal adversary $\Sim$ who may corrupt one of the dummy parties, and a  functionality $\Func$. The ``real" world execution involves the PPT parties   $P_0$  and $P_1$  and a real world adversary $\Adv$ who may corrupt 
one of the parties.   The environment $\Env$~chooses the input of the parties and may interact with the ideal/real adversary during the execution. At the   end of the execution, it has to decide upon and output whether a real or an ideal world execution has taken place. 

We let $\Ideal_{\Func, \Sim,\Env}(1^\csec,z)$ denote the random variable describing the output of the environment $\Env$ after interacting with the ideal execution with adversary $\Sim$, the functionality $\Func$, on the security parameter $1^\csec$ and $z$.  Let $\Ideal_{\Func, \Sim,\Env}$ denote the ensemble $\{\Ideal_{\Func, \Sim,\Env}(1^\csec,z)\}_{\csec \in \N,z \in \bitset^*}$. Similarly let $\Real_{\Pi, \Adv,\Env}(1^\csec,z)$ denote the random variable describing  the output of the environment $\Env$~after interacting in a  real execution of a protocol  $\Pi$ with adversary $\Adv$, the parties, on the security parameter $1^\csec$ and $z$.  Let $\Real_{\Pi, \Adv,\Env}$ denote the ensemble 
$\{\Real_{\Pi, \Adv,\Env}(1^\csec,z)\}_{\csec \in \N,z \in \bitset^*}$.

\begin{definition}
	For $n\in \N$, let $\Func$ be a functionality and let $\Pi$ be an $2$-party protocol. We say that  $\Pi$ {\it securely realizes} $\Func$~if for every  PPT real world adversary $\Adv$, there exists a PPT ideal world adversary $\Sim$, corrupting the same parties, such that the following two distributions  are computationally indistinguishable:
	\[\Ideal_{\Func, \Sim, \Env}  \stackrel{c}{\approx}  \Real_{\Pi, \Adv, \Env}. \] 
\end{definition}

\noindent{\bf The $\Func$-hybrid model.}
In order to construct some of our protocols, we will use secure two-party protocols as subprotocols. The standard way of doing this is to work in a ``\emph{hybrid model}'' where both the parties interact with each other (as in the real model) in the outer protocol and use ideal functionality calls (as in the ideal model) for the subprotocols. Specifically, when constructing a protocol $\Pi$ that uses a subprotocol for securely computing some functionality $\Func$,  the parties run $\Pi$ and use ``ideal calls'' to $\Func$ (instead of running the subprotocols implementing $\Func$). 
The execution of $\Pi$ that invokes $\Func$ every time it requires to execute the subprotocol implementing $\Func$ is called   the {\em $\Func$-hybrid execution of $\Pi$} and is denoted as $\Pi^\Func$. The hybrid ensemble $\Hyb_{\Pi^\Func,\Adv,\Env}(1^\csec,z)$ describes $\Env$'s output after interacting with $\Adv$ and the parties $P_0$, $P_1$ running protocol $\Pi^\Func$. By UC definition, the hybrid ensemble should be indistinguishable from the real ensemble with respect to protocol $\Pi$ where the calls to $\Func$ are instantiated with a realization of $\Func$.

\end{document}